\documentclass[aps,prb,twocolumn,10pt,superscriptaddress]{revtex4-2}
\usepackage{header}
\usepackage{adjustbox}
\usepackage{physics}
\newtheorem{theorem}{Theorem}
\newtheorem{lemma}[theorem]{Lemma}

\newtheorem{definition}[theorem]{Definition}
\newtheorem{proposition}[theorem]{Proposition}

\begin{document}
\title{Foliated Quantum Error Correction for Qudits}
\author{G\"ozde \"Ust\"un}
\affiliation{Centre for Quantum Software and Information, University of Technology Sydney, Sydney, New South Wales 2007, Australia}
\affiliation{Joint Center for Quantum Information and Computer Science, University of Maryland, College Park, MD 20742, USA}
\author{Simon J. Devitt}
\affiliation{Centre for Quantum Software and Information, University of Technology Sydney, Sydney, New South Wales 2007, Australia}
\affiliation{InstituteQ, Aalto University, 02150 Espoo, Finland}
\author{Jason Saied}
\affiliation{Sydney, Australia}

\begin{abstract}
We present a framework for foliating any Pauli-based quantum error-correcting code over prime-dimensional qudits. For any such code, we obtain a qudit graph state that can be measured to perform fault-tolerant measurement-based quantum computing. 
Such a paradigm is of interest in platforms such as photonics, where measurement-based protocols are natural and high-dimensional states are readily available. 
We discuss several examples for arbitrary prime dimension $d$, such as the qudit toric code (stabilizer, CSS), the $d$-dimensional perfect $[[5,1,3]]$ code (stabilizer, non-CSS), and a straightforward $d$-dimensional generalization of the CSS honeycomb code (dynamical, CSS). 
Under a simple error model, we numerically calculate thresholds for the foliated qudit toric code and demonstrate that they are comparable to the non-foliated version. 
\end{abstract}
\maketitle
\maketitle

Measurement-based quantum computation (MBQC)~\cite{MBQC, raussendorf2003measurement}
is a paradigm for universal quantum computation in which a large graph state is prepared, then processed using single-qubit (or qudit) measurements. 
MBQC and related paradigms such as fusion-based quantum computing~\cite{Bartolucci2023, bombin2021interleaving, bombin2023faulttolerantcomplexes} 
are of particular interest in platforms such as photonics: since photons cannot easily interact, it is beneficial to separate the (Clifford) entanglement generation step from the (non-Clifford) measurements. The choice of measurements largely determines the computation. 
These same features also make MBQC appealing for protocols such as blind and verifiable quantum computing~\cite{Broadbent_2009, morimae2012blind, hayashi2015verifiable, fitzsimons2017unconditionally}. 
Fault-tolerant versions of MBQC are well-known in the qubit case, beginning with Raussendorf's 3D cluster state providing a \emph{foliation} of the surface code \cite{Raussendorf2005, RAUSSENDORF20062242, Raussendorf2007, Raussendorf_topological}. 
That is, the two-dimensional code is replaced by a series of connected two-dimensional sheets, with the third dimension representing the flow of information over time. 
The code has detectors, related to the surface code stabilizers, that can be used for quantum error correction, with comparable performance to the standard surface code. 
This construction was generalized to provide foliations of general (stabilizer) CSS codes in \cite{bolt2016foliated}. The work of \cite{brown2020universal} then generalized this to the case of non-CSS and subsystem codes (in the qubit setting). 

We now consider the qudit case. 
In many instances, the error-correction threshold of a quantum error-correcting code has been seen to \emph{increase} with the dimension of the underlying quantum system \cite{andriyanova2012new, anwar2014fast, watson2015fast, Brock_2025}. 
However, this theoretical increase is not universally advantageous: moving beyond qubits typically leads to more complex noise models with higher error rates and greater control challenges~\cite{Ringbauer2022, Low2025}. 
In photonic systems, however, higher-dimensional states are readily available, and the dominant noise mechanism remains photon loss, regardless of the system dimension. 
This motivates interest in fault-tolerant MBQC over higher-dimensional systems, where one may potentially take advantage of the greater thresholds of qudit codes. 
MBQC has been developed for qudit systems in the non-fault-tolerant case~\cite{Zhou_2003, booth2023outcome, Romanova_2026}. 

In the present work, we give a construction for foliation of an arbitrary Pauli-based quantum error-correcting code over prime-dimensional qudits. 
This generalizes the construction of \cite{brown2020universal} to qudits, and to the case of \emph{dynamical} codes rather than only stabilizer or subsystem codes.
After establishing notation and conventions, we review the foundational case of foliation of a single physical qudit. 
We then present Definition~\ref{def:css graph state}, which gives the general construction of the foliated graph state corresponding to a (possibly dynamical) CSS code. The remainder of the section discusses the relevant detectors that can be used for quantum error correction. We also briefly discuss the small modifications that must be made for the non-CSS case. 
We then give examples, starting from the foliated qudit toric code, for which we numerically demonstrate comparable error rates to the work of \cite{watson2015fast} for circuit-based codes, with the error-correction threshold increasing with the qudit dimension. 
As an example of a non-CSS code, we discuss the foliated $[[ 5,1,3]]$ qudit code \cite{chau1997five}. Finally, as a dynamical example, we give a qudit generalization of the CSS honeycomb code \cite{davydova2023floquet} and demonstrate its foliated realization. We support and verify our theoretical constructions with simulations.

\textit{Qudit Paulis and Qudit Graph States}
We briefly recall and fix notation for the essential properties of the qudit Pauli operators and graph states in prime dimension $d$. 
(See \cite{helwig2013absolutelymaximallyentangledqudit} for a more detailed exposition using similar notation.)  
Let $\omega$ be a primitive $d$th root of unity, and write the computational basis states of the qudit as $\ket{0}, \dots, \ket{d-1}\in\mathbb{C}^d$. 
We have
    $Z = \sum_{k=0}^{d-1} \omega^k \ketbra{k}, \,\, X = \sum_{k=0}^{d-1} \ketbra{d+1}{d},$  
satisfying $XZ=\omega^{-1} ZX$. 
We also recall the Fourier transform $F = \frac{1}{\sqrt{d}}\left(\omega^{ij}\right)_{0\leq i,j\leq d-1}$, 
which satisfies
$FZF^\dagger = X^\dagger,\,\,
FXF^\dagger = Z$. 
The operators $X$ and $Z$ generate the (qudit) Pauli group (often called the Heisenberg-Weyl group), with elements $X^a Z^b$ for $a,b\in\mathbb{Z}_d$. (We will generally neglect global phase factors.) 
Since the dimension $d$ is prime, the standard notions and results regarding stabilizer groups generalize to the qudit setting, replacing arithmetic modulo $2$ with arithmetic modulo $d$ \cite{gottesman1998fault}. 
Correspondingly, we will represent a multi-qudit Pauli operator $X^{p_{X,1}}Z^{p_{Z,2}}\otimes \cdots\otimes X^{p_{X,r}}Z^{p_{Z,r}}$ by a pair of vectors over $\mathbb{Z}_d$, $p_X = (p_{X,1}, \dots, p_{X,r})$ and $p_Z=(p_{Z,1}, \dots, p_{Z,r})$

We recall the CZ gate, which has the following equivalent characterizations: 
\begin{equation*}
    CZ = \sum_{a,b=0}^{d-1} \omega^{ab}\ketbra{ab} = \sum_{a=0}^{d-1} \ketbra{a}\otimes Z^a = \sum_{b=0}^{d-1} Z^b \otimes \ketbra{b}.
\end{equation*}
We will use the notation $CZ_{i,j}$ to refer to the CZ gate acting on qudits $i$ and $j$ of a larger state. Since the CZ gate is symmetric, this notation is unambiguous. 

Let $G$ be a simple graph with nodes $\{1, \dots, N\}$ and edge set $E$, and further let each edge $(i,j)\in E$ be assigned a weight $w_{ij}\in \{1, \dots, d-1\}$. 
(If $(i,j)\not\in E$, we say that $w_{ij}=0$.) 
The corresponding qudit graph state is the state
\begin{equation*}
    \prod_{(i,j)\in E} (CZ_{i,j})^{w_{ij}}\ket{+}^{\otimes N},
\end{equation*}
where $\ket{+}=\frac{1}{\sqrt{d}}\sum_{k=0}^{d-1}\ket{k}$. 
This may be characterized as a stabilizer state, with stabilizer group generated by the following $N$ stabilizers, one for each node $i$: 
\begin{equation} \label{eq:main}
    X_i \prod_{j} Z_j^{w_{ij}}.
\end{equation}
This gives a simple recipe for measuring a stabilizer $\prod_j Z_j^{p_j}$ on a set of qudits: we prepare an ancilla in the $\ket{+}$ state, apply $CZ^{p_j}$ between qudit $j$ and the ancilla, then measure $X$ on the ancilla. 
This is equivalent to the standard circuit-based measurement circuit and will be the key to measuring the checks of our foliated code. 

\textit{Qudit Error Correction}
We will consider several levels of generality for qudit error-correcting codes. 
The most commonly studied are \emph{stabilizer codes}, in which the generators of an abelian group $\mathcal{S}$ of Pauli operators are measured and used to detect and correct errors. 
We also have the more general family of \emph{subsystem codes}, in which we measure the generators of a (not necessarily abelian) \emph{gauge group} $\mathcal{G}$ of Pauli operators. 
The stabilizers used for error detection/correction are then the center of the gauge group; the values of the stabilizers are inferred by appropriately combining gauge generator measurement outcomes \cite{poulin2005stabilizer}. 
Finally, we will also consider \emph{dynamical codes}, a generalization of subsystem codes in which the choice of checks can change between rounds, potentially in an aperiodic fashion. In the present work, we discuss only unmasked stabilizers \cite{fu2025error}, but the construction may be straightforwardly modified to account for different choices of measurement schedule. 

We now formalize our notation for CSS-type codes (possibly dynamical) involving $N$ data qudits of dimension $d$. 
(We focus on this case for simplicity, and briefly discuss the modifications necessary for the non-CSS case later.) 
For $0\leq t\leq T$, we assume we are given a $r(t)\times N$ generator matrix $H(t)$ (with $r(t)$ being the number of checks measured in time step $t$).  
For $t$ even, the rows of $H(t)$ specify the $Z$-type checks measured in step $t$, with the $(i,j)$ entry indicating the power of $Z_j$ in the $i$th check. 
For odd $t$, the rows of $H(t)$ analogously give the $X$-type checks. For a subsystem (or stabilizer) code, the generator matrices $H(t)$ only depend on the parity of $t$, with $H(2k)=H_Z$ and $H(2k+1)=H_X$, where $H_X$, $H_Z$ are the usual parity check matrices. 
We also assume each data qudit's initial state is specified as some eigenstate of $X$ or $Z$. (This assumption is not essential and can be removed, as discussed in the following section.) 

\textit{Single Qudit Foliation:}
We begin by discussing a trivial but crucial special case, the foliation of a single \emph{physical qudit}. Teleportation generalizes straightforwardly to the (prime-dimensional) qudit case, as in Fig.~\ref{fig:teleportation}. 
Thus one may encode a state $\ket{\psi}$ in a measurement-based manner by preparing a physical qudit in state $\ket{\psi}$, along with a linear chain of $\ket{+}$ states, and appropriately performing $CZ^{\pm 1}$ gates and $X$ measurements. 
We refer to a state of the resulting form (with an even number of physical qubits) as a \emph{chain state} $Ch(\ket{\psi})$, as in Figure~\ref{fig:chain}; by abuse of notation, we generally avoid specifying the length of the chain. 

\begin{figure}
    \centering
    \includegraphics[width=\linewidth]{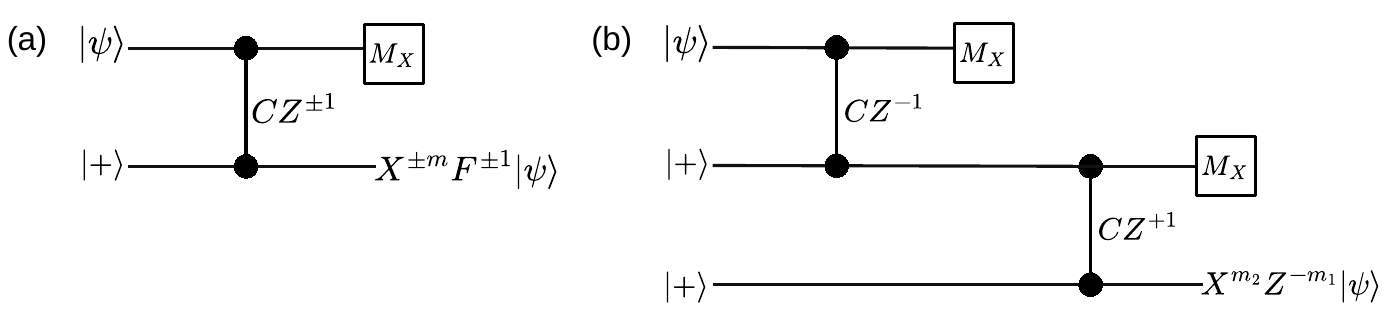}
    \caption{(a) Teleportation of a single qudit state using a $CZ^{\pm 1}$ gate and $X$ basis measurement. The output state is Fourier transformed and has a Pauli factor depending on the outcome $m$ of the measurement. 
    (b) Composing two teleportation circuits with opposite powers of $CZ$, we see that the Fourier factors cancel. Here $m_1, m_2$ are the measurement outcomes. 
    }
    \label{fig:teleportation}
\end{figure}
\begin{figure}
    \centering
    \includegraphics[width=\linewidth]{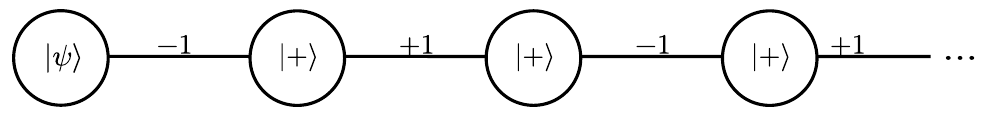}
    \caption{A linear chain state $Ch(\psi)$, encoding the state $\ket{\psi}$ in a measurement-based manner.}
    \label{fig:chain}
\end{figure}

In what follows, however, it is cleaner to always prepare the same linear chain $Ch(\ket{+})$ beginning with only $\ket{+}$ states, then initialize other chains $Ch(\ket{\psi})$ using measurements. 
As noted, the default linear chain already has the form $Ch(\ket{+})$; 
further, one may modify this to any state $Ch(Z^k\ket{+})$ by simply applying $Z^k$ to the first qudit, or tracking this correction in a \emph{Pauli frame} analogous to the qubit case. 
It is also simple to prepare a chain $Ch(\ket{k})$ by starting with a longer chain $Ch(\ket{+})$, then measuring the first two qudits in the $Z$ and $X$ bases respectively. (Up to a known Pauli frame, due to the randomness of the measurements.) 
The state is progressed along the chain by performing $X$ measurements on every data qudit except for the final two. At the end of a chain, one measures the $X$ observable by measuring the final two qudits in the $X$ and $Z$ bases respectively; the $Z$ observable is measured using the bases $Z$ and $X$ respectively. 

In the constructions below, each \emph{data qudit} of the code will be foliated into a chain state. 
Within each layer (or ``time step''), we further entangle the data qudits with ancillas. 
The ancillas are measured to obtain the values of the code checks in a given step, as discussed below \eqref{eq:main}. The post-measurement state of the data qudits is then teleported to future layers in the chain state. 

\textit{Framework for Qudit Measurement-Based Quantum Computing (CSS):}
We now give the precise description of the graph state and detectors needed for foliated qudit MBQC in the CSS case. 
For convenience, we will index the nodes of the graph state by $3$-tuples, where the first index indicates whether it is a data or ancilla qudit, the second indicates its label, and the third indicates the time step. 
\begin{definition}\label{def:css graph state}
Given a CSS-type code specified by $d, N, T$, and $H(t)$ as above, we construct the corresponding graph state $\Lambda=\Lambda(d,N,T,H)$: 
\begin{enumerate}
    \item We have the set of \emph{data qudit nodes} 
        $\mathcal{D} = \{(0,q,t): 1\leq q \leq N, 0\leq t \leq T\}$.
    For $t>0$, we add an edge from $(0,q,t-1)$ to $(0,q,t)$, with weight $(-1)^t$. 
    \item For $0\leq t \leq T$, we have the set of \emph{ancilla qudit nodes}
        $\mathcal{A}_t = \{(1, c, t): 1\leq c \leq r(t)\}$,
    with each $c$ corresponding to a check at time $t$ (a row of $H(t)$). 
    For each such node $(1,c,t)$, and each data qudit $q$ in the support of $c$, 
    we add an edge to $(0,q,t)$ with weight $(-1)^t H(t)_{cq}$. 
\end{enumerate}
\end{definition}
To perform (a memory experiment in) MBQC, one simply measures the $X$ observable on every qudit, modulo the initial and final measurements as discussed above. 
The result of measuring $X$ on ancilla $(1,c,t)$ roughly corresponds to measuring the check $c$ at time step $t$ in the analogous circuit-based code. 
(See the discussion below \eqref{eq:main}.) 
However, due to the measurement-based nature, the exact values of the measurements will be random even without errors. 
Thus (analogous to the case of circuit-based QEC with measurement errors), one must form detectors using (roughly) the product of consecutive measurements of the same stabilizer. 
Rigorously, a \emph{detector} is a stabilizer of the graph state $\Lambda$ that commutes with all the measurements performed, so that (regardless of the random measurement outcomes) it stabilizes the post-measurement state. 
Cases in which the detectors \emph{do not} stabilize the final state imply the presence of errors, allowing for error detection and correction. 
We begin with the stabilizer case, since it is the simplest (compare with Fig.~\ref{fig:examples} for the case of the toric code): 
\begin{definition}\label{def:detector}
For each ancilla node $(1,c,t)$ with $t+2\leq T$, the associated detector is
\begin{equation}\label{eq:detector explicit}
    D(c,t)=X_{(1,c,t)}X_{(1,c,t+2)}^{-1} \times \left(\prod_{q=0}^{N-1} X_{(0,q,t+1)}^{H(t)_{cq}}\right)^{(-1)^t}.
\end{equation}
\end{definition}
This detector has a simple form, forced by the structure of the qudit graph state: we have the initial ancilla with weight $+1$, and the ``same'' ancilla in a later time step (the next time it is measured!) with weight $-1$. (This is what one would do in a circuit-based framework as well, so that the product would always have eigenvalue +1 in the absence of errors.) In our MBQC case, we must also involve the data qudits in between, with powers corresponding to the power with which they appear in the relevant check. Depending on the initialization and terminal measurement conditions, one may have additional time boundary detectors, truncated versions of the above, which we leave as a straightforward exercise for the reader. 

More generally, consider a stabilizer $g$ expressed as a product of checks: $g = c_1^{p_1}\cdots c_{r}^{p_{r}}$, where the $c_j$ are the checks measured at time $t$ (viewed as Pauli group elements) and $r=r(t)$. 
(As this is the non-dynamical case, the time $t$ only indicates whether $g$ is composed of Pauli $X$ or $Z$ operators.) 
There is a detector corresponding to $g$ at time $t$, and it is simply the appropriate product of the $D(c_j,t)$: 
\begin{equation}\label{eq:detector product}
        D(g,t):=D(c_1,t)^{p_1}\cdots D(c_{r},t)^{p_{r}}. 
\end{equation}
One typically does not use such detectors for error correction with a stabilizer code, as they are redundant with the $D(c_j,t)$. However, this perspective is useful to lead into the subsystem case. 

In a subsystem code, recall that the checks $c_j$ (rows of $H(t)$) are generators of the gauge group, and the stabilizers are the center of the gauge group. 
For any such stabilizer $g = c_1^{p_1}\cdots c_{r}^{p_{r}}$ expressed as above, the detector $D(g,t)$ corresponding to $g$ and starting at time $t$ is given by \eqref{eq:detector product}. 
In particular, it is irrelevant whether the individual $c_j$ are stabilizers themselves (equivalently, whether the $D(c_j,t)$ are detectors); since the appropriate product of the $c_j$ is a stabilizer, the corresponding product of the $D(c_j,t)$ is a detector. 
 
For the dynamical case, let $S(t)$ be the instantaneous stabilizer group at time $t$ (that is, the stabilizers of the code space \emph{after} measuring the checks $c_1, \dots, c_r$ of $H(t)$). 
We say a stabilizer is \emph{measured at time $t$} if it can be expressed in terms of the checks $c_j$ of $H(t)$, so that its value can be inferred from the check measurements. 
We may build a detector corresponding to any element in $\bigcap_{j=0}^k S(t+j)$ that is measured at times $t$ and $t+k$. 
These detectors are similar to the above, but they must be ``stretched'' in the time direction so that they begin and end in time steps $t, t+k$ which measure the stabilizer. 
(Note, in a dynamical code, not every element of $S(t)$ is actually measured at time $t$, and may perhaps never be measured at all. 
The stabilizers that are measured are referred to as ``unmasked'' stabilizers \cite{fu2025error}.) 

We briefly consider the effect of Pauli errors after the preparation of the graph state. 
Since our detectors consist only of $X$ measurements, only $Z$ errors are relevant to the error correction. In fact, we have an equivalence with the corresponding circuit-based code under phenomenological noise, with circuit-level $Z$ and $X$ errors mapping to $Z$ errors on the graph state in even and odd time steps respectively. We make this correspondence precise in the Appendix. 

\textit{Non-CSS case:} 
There are the following three changes from the CSS case: (1) Rather than ancillary qudits living in a single time step, they will in general be adjacent to \emph{both} $Z$-type qudits in step $t=2k$ \emph{and} $X$-type qudits in step $t=2k+1$.  
(2) For \emph{every} pair of ancillary qudits measured during the same pair of time steps $(2k,2k+1)$, 
corresponding to checks $X^{p_X}Z^{p_Z}$ and $X^{q_X}Z^{q_Z}$, 
we need an additional edge of weight $p_X\cdot q_Z= p_Z\cdot q_X$. This serves to preserve the commutativity between the checks measured in the same round (cf. \cite{brown2020universal}). 
(3) On the ancillary qudit corresponding to a check $X^{p_X}Z^{p_Z}$, one must measure $XZ^{-p_X\cdot p_Z}$ instead of just $X$.  
This directly generalizes the CSS case, in which each check has $p_X=0$ or $p_Z=0$. 
The non-CSS case is discussed more formally in the Supplementary Materials, Sec.~\ref{sec:non css theory}. 

\textit{Examples: } Here, for the purposes of illustration, we briefly discuss foliated qudit analogues of the toric code, the $5$-qudit perfect code, and the CSS honeycomb code. Further details can be found in the Appendix and Supplementary Material.

\begin{figure}
    \centering
    \includegraphics[width=\linewidth]{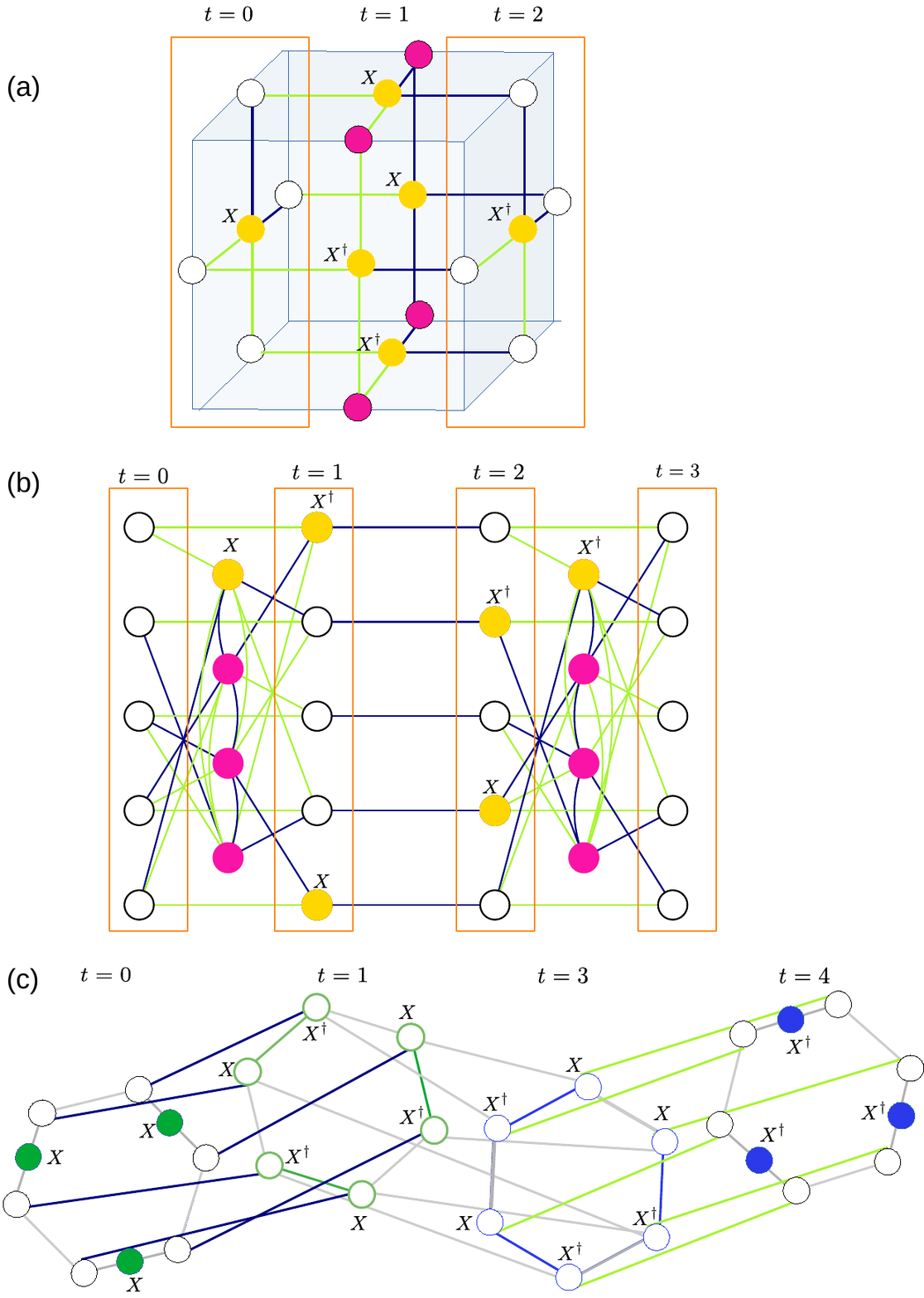}
    \caption{Examples of foliated qudit codes. 
(a) shows the unit cell for the qudit toric code, 
(b) shows two measurement rounds (four time steps) of the foliated (qudit) perfect code, and 
(c) shows part of the foliated CSS honeycomb code. 
In all examples, darker lines correspond to edges with weight $+1$, while green lines correspond to edges with weight $-1$. 
In (a) and (b), the golden nodes (labeled with Paulis) depict an example detector of the foliated code. (With the uninvolved ancillas colored pink.) 
The colors in (c) are used differently to avoid conflicting with the honeycomb code color conventions. 
We show enough of the graph state to exhibit the detector corresponding to a red face of the honeycomb code. 
Green and blue are used to mark the ancilla nodes corresponding to the green and blue edges of that face. All but six of the pictured nodes are involved in the detector. (Note that no nodes at $t=2$ are involved, so none are shown.) 
}
    \label{fig:examples}
\end{figure}

In Fig.~\ref{fig:examples}(a), we present the \emph{unit cell} of the qudit toric code \cite{Bullock_2007}. Similar to the qubit case, the graph state for the foliated qudit toric code may be obtained by tiling the cubic unit cell in three dimensions (with toric boundary conditions in two directions). The Pauli operators in the figure are used to depict the $Z$-type detector $D(c,0)$ hosted on this unit cell. (The $X$-type detectors are hosted \emph{between} unit cells.) 
For the purposes of validation, in Appendix~\ref{app:toric numerics} we calculate thresholds for the foliated qudit toric code under a simple noise model in which Pauli errors are applied after preparation of the graph state. 
For $d=3$, we find threshold approximately $0.023$; this value increases with $d$, with $d=7919$ giving threshold approximately $0.088$. (See Table~\ref{tab:example}.) 
These thresholds are in fact larger than the analogous values in \cite{watson2015fast}, as we discuss in the Appendix. 

In Fig.~\ref{fig:examples}b, we present an example of a foliated non-CSS code, specifically the qudit $[[5,1,3]]$ code \cite{chau1997five}, a perfect code stabilized by $Z^\dagger \otimes X^\dagger\otimes I\otimes X\otimes Z$ and its cyclic permutations. 
The figure shows the full foliation over two layers of the code. Unlike the CSS case, the ancillas are entangled with data qudits in both even and odd time steps, and there are edges between the ancillas.

In Fig.~\ref{fig:examples}c, we exhibit part of the foliation of the qudit CSS honeycomb code as an example of a dynamical qudit code. 
As the qudit analogue of the CSS honeycomb code \cite{davydova2023floquet} was not present in the literature, we discuss this generalization in further detail in Appendix~\ref{app:honeycomb}. 
Due to the dynamical measurement schedule, more time steps are required to complete a detector compared to the previous two examples. 
The figure illustrates the detector associated with a red face of the honeycomb code. 
A red face involves three green edges, three blue edges, and six data qudits. 
Thus, the detector involves three green ancilla qudits at $t = 0$, the relevant six data qudits at $t=1$ and $t=3$, and finally the three blue ancilla qudits at $t=4$. 
We note that this detector is not supported on the $t=2$ portion of the graph state. 

\textit{Conclusion}
In this work, we present a general framework for constructing high-dimensional measurement-based quantum computing (MBQC) that applies to any family of Pauli-based QEC codes—including stabilizer, subsystem, and dynamical codes—in both the CSS and non-CSS cases. As an example of a CSS stabilizer code, we foliate the qudit toric code and numerically demonstrate that its error-correction threshold is comparable to the circuit-based version, using the decoder of \cite{anwar2014fast, watson2015fast}. 
We then present the foliated perfect $[[5,1,3]]$ qudit code as an example of a non-CSS family. We also present a new generalization of the CSS honeycomb code \cite{davydova2023floquet} to the qudit case, and discuss its foliation as an example of a CSS Floquet code. 

Fault-tolerant qudit MBQC is of particular interest in platforms such as photonics, where measurement-based protocols are natural and high-dimensional states are readily available. 
For realistic implementation of qudit MBQC, we note that it is unnecessary to construct the entire graph state ahead of time. Instead, one only needs to keep $2-3$ layers of the state in memory at a time, enough to measure the relevant checks and teleport to the next layer. Subsequent layers may be entangled in later, as they are needed. 
In the photonic setting, where the deterministic $CZ$ gates required to entangle with subsequent layers are inaccessible, one can instead break the single large graph state into smaller uniform pieces, which can be generated in a deterministic~\cite{gimeno2019deterministic, raissi2024deterministic,comparing_schemes} or multiplexed \cite{lee2023graph,pankovich2024flexible,lobl2025transforming} fashion. 
These smaller pieces are then joined and measured using destructive entangling operations such as \emph{fusion} \cite{browne_rudolph, ustun2025fusion, comparing_schemes}. 
This is the essential idea behind fusion-based quantum computation (FBQC)~\cite{Bartolucci2023, bombin2021interleaving, bombin2023faulttolerantcomplexes}, which presently has been developed only for qubits. 
Future work should focus on modifying the foliated MBQC construction to allow for resource-efficient FBQC in the qudit case. 

\section*{DATA AVAILABILITY STATEMENT}
\label{sec_data}
The simulations were performed using \texttt{sdim}~\cite{kabir2026sdimquditstabilizersimulator}, and the corresponding data are available in the auxiliary files of the arXiv version of the paper. 
\section*{Acknowledgments}
Project led by University of Technology Sydney and supported by Defence Science and Technologies Group (DSTG) and Advanced Strategic Capabilities Accelerator (ASCA) through its Emerging and Disruptive Technologies (EDT) Program.
This work was supported in part by ARO grants W911NF-23-1-0242 and W911NF-23-1- 0258.
\newpage
\section*{Appendix}
\setcounter{section}{0}
\renewcommand{\thesection}{\Alph{section}}
\renewcommand{\theequation}{\arabic{equation}}
\renewcommand{\thefigure}{\arabic{figure}}
\renewcommand{\thetheorem}{\arabic{theorem}}

\section{Noise Model and Equivalence to Circuit-Based Phenomenological Noise}
\begin{proposition}
    Given a CSS-type code specified by $d, N, T$, and $H(t)$ in Definition~\ref{def:css graph state}, the MBQC protocol on the graph state $\Lambda=\Lambda(d,N,T,H)$ is logically equivalent to a circuit-based implementation of the code. Further, Pauli $Z$ errors on ancilla qudits of $\Lambda$ correspond to measurement errors in the circuit-based framework. Pauli $Z$ errors on data qudits $(0,q,t)$ of $\Lambda$ correspond to data qudit $Z$ or $X$ errors if $t$ is even or odd respectively. 
\end{proposition}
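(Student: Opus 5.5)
We would prove the proposition by rewriting the MBQC protocol on $\Lambda$ as a circuit and then simplifying that circuit layer by layer. All the $CZ$ gates defining $\Lambda$ commute, and each qudit is measured only once, in the $X$ basis. So we may reorder the preparation and measurements into a \emph{time-sliced} circuit. For $t=0,1,\dots,T$ in order, we: (i) prepare the ancillas $\mathcal{A}_t$ in $\ket{+}$; (ii) apply the edges $CZ^{(-1)^tH(t)_{cq}}$ between $(1,c,t)$ and $(0,q,t)$; (iii) measure $X$ on the ancillas; (iv) apply the chain edge $CZ^{(-1)^{t+1}}$ between $(0,q,t)$ and $(0,q,t+1)$, with the latter freshly prepared in $\ket{+}$; (v) measure $X$ on $(0,q,t)$.

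This reordering is legitimate for two reasons. First, any gate acting on a qudit measured earlier commutes with that measurement whenever the gate does not touch it. Second, gates that do touch it all precede its measurement in our ordering, since edges only connect layers $t$ and $t+1$ through the chain, and ancillas only at layer $t$. Steps (iv)--(v) are exactly the teleportation of Fig.~\ref{fig:teleportation}(a). They map the state on qudit $(0,q,t)$ to $F^{\pm1}$ of it, up to a Pauli byproduct $X^{m}$ or $Z^{m}$ determined by the outcome, which we absorb into a Pauli frame. Because the chain weights alternate in sign, consecutive Fourier factors cancel as in Fig.~\ref{fig:teleportation}(b).

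Next we track the frame of each layer. We work in the frame where layer $t$ is conjugated by $F^{t \bmod 2}$ (with the appropriate sign). In even layers, the data qudits are the original data qudits. There, step (ii)--(iii) is exactly the ancilla measurement of $\prod_q Z_q^{H(t)_{cq}}$ described below \eqref{eq:main}. In odd layers, the data qudits carry an extra $F$, and a $CZ^{-H}$ from an ancilla, conjugated back by $F$, becomes a controlled-$X^{H}$. The sign $(-1)^t$ is there to compensate $FZF^\dagger=X^\dagger$. As a result, the ancilla measures $\prod_q X_q^{H(t)_{cq}}$, the $X$-type check of $H(t)$. Including the initial and final boundary measurements from the Single Qudit Foliation discussion, this gives a circuit that alternately measures the rows of $H(t)$ on $N$ data qudits. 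Its outcomes differ from the MBQC outcomes only by known Pauli-frame shifts. This establishes logical equivalence, and shows that the detectors of Definition~\ref{def:detector} become ordinary repeated-check comparisons, with the data-qudit factors $X_{(0,q,t+1)}^{H(t)_{cq}}$ exactly cancelling the frame shifts from teleportation.

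The error correspondence then follows by pushing errors through this circuit. A $Z$ error on an ancilla $(1,c,t)$ after preparation commutes with all the $CZ$s. Its only effect is to shift the $X$ outcome by a phase, i.e.\ a measurement error on check $c$ at time $t$. A $Z$ error on data qudit $(0,q,t)$ likewise commutes to just before step (v). For $t$ even, it is a $Z$ error on the logical-frame data qudit between rounds; through teleportation it also shifts $m$, but this is the same thing as the data error expressed in the frame. For $t$ odd, conjugating back through $F$ turns it into an $X^{\pm1}$ error on the data qudit.

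The main obstacle is the bookkeeping of signs and powers, namely $F$ versus $F^\dagger$ and $X$ versus $X^\dagger$, across layers. We need to check that the weights $(-1)^t$ make the ancilla measure the check itself rather than its inverse or a $Z$/$X$ mixture. We would handle this with a single conjugation table for $CZ^{\pm1}$ under $F\otimes I$, verified once and applied inductively in $t$.
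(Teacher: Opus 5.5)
Your proposal is correct and takes essentially the same route as the paper. You rewrite the MBQC as a circuit in which the alternating-sign teleportations leave odd layers Fourier-conjugated, so the $CZ$-to-ancilla gadget measures $Z$-checks in even layers and $X$-checks in odd layers; ancilla $Z$ errors become measurement errors, and data $Z$ errors become $Z$- or $X$-type errors according to the parity of $t$. Your explicit time-slicing and Pauli-frame bookkeeping, and your use of $FZF^\dagger = X^\dagger$ for the odd-layer data errors, make precise what the paper states tersely; the paper instead rewrites the gadget as a standard $CX$ syndrome-extraction circuit and argues via which checks in layer $t+1$ are flipped.
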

\begin{proof}
    In MBQC, we measure checks on data qudits by applying powers of $CZ$ gates between them and an ancilla in the $\ket{+}$ state, then measuring the ancilla in the $X$ basis. 
    Writing $\ket{+} = F\ket{0}$ and $M_X = F^\dagger M_Z$, then rewriting $(1\otimes F^\dagger)CZ(1\otimes F)=CX$ (controlled $X$ gates targeting the ancilla), we obtain the standard circuit for extracting $Z$ syndromes in the circuit-based framework. 
    In odd time steps, the data qudits have been conjugated by Fourier transforms due to the teleportations before and after the syndrome extraction. 
    Since we conjugate a $Z$-check measurement circuit by Fourier transforms, we obtain an $X$-check measurement circuit. 
    From this description, we directly see that a Pauli $Z$ error on an ancilla qudit affects the corresponding check and corresponds to a measurement error. 
    Similarly, a Pauli $Z$ error on a data qudit in time step $t$ will flip the measurement outcomes of all checks measured in the \emph{subsequent} time step $t+1$. 
    Translating to the circuit-based model, this is equivalent to a $Z$ error if $t$ is even (since the subsequent layer measures $X$-type stabilizers) and an $X$ error if $t$ is odd. 
\end{proof}
In our simulations, we consider a simple noise model in which the graph state $\Lambda$ is prepared perfectly, then Pauli $Z$ errors are applied to each qudit with probability $p$. More precisely, for $1\leq k\leq d-1$, we apply $Z^{k}$ with probability $p/(d-1)$. 
Due to the Proposition, we note that Pauli $X$ errors (after graph state preparation) are irrelevant, as they do not affect any of the detectors in the CSS case. 
This is not unique to the qudit setting. 
\section{Qudit Toric Code Numerics}\label{app:toric numerics}
We consider the (rotated) qudit toric code of even distance $D$ (with $d$-dimensional qudits). The qudits sit on a $D\times D$ square lattice with toric boundary conditions. The stabilizers live on the plaquettes of the lattice and are essentially the same as the qubit case, with $X$ replaced by $X^\dagger$ on the left side of each plaquette and $Z$ replaced by $Z^\dagger$ on the bottom of each plaquette. 

For decoding the qudit toric code, we implemented a variant of the Hard-Decisions Renormalization Group (HDRG) decoder of \cite{anwar2014fast, watson2015fast}. 
This is a CSS decoding algorithm taking place on the ($X$ or $Z$ type) \emph{syndrome graph}, in which nodes are detectors and edges correspond to single-qudit Pauli errors that flip the relevant pair of detectors. 
The qudit toric code's decoding graph has a simple lattice structure: we use this structure to create an easily-calculated distance metric between nodes in the syndrome graph. 
We give an informal summary of the operation of the HDRG decoder, given a \emph{syndrome} consisting of a number in $\mathbb{Z}_d$ for each detector: 
(1) For any $2$ adjacent detectors with opposite syndromes, we infer that an error has occurred on the edge between them; record this correction and update the syndrome. 
(2) Identify clusters of ``nearby'' detectors with nonzero syndrome. (Each node in the cluster should be at most distance $r$ from some other node in the cluster.) 
(3) For each ``neutral'' cluster, in which the syndromes add up to $0$, there exists a Pauli correction that eliminates the cluster (turning all of its syndromes to $0$). Identify one such correction, record it, and update the syndrome. 
(4) If there are still nonzero syndromes, return to step $2$, this time increasing the value of $r$. 

The sketch above (since it includes step 1) corresponds to the HDRG decoder with one level of \emph{initialization}, in which ``obvious'' errors are corrected before the clustering begins. 
We note that there is a small ambiguity in Step 3: especially as the permitted distance $r$ between nodes in a cluster grows larger, there are often multiple inequivalent choices of Pauli correction to eliminate a cluster. 
We did not find a discussion on how this choice was made in \cite{anwar2014fast, watson2015fast}, so our implementation may have improved upon this choice slightly. For each neutral cluster of detectors with nontrivial syndromes (adding to $0$), we constructed a complete graph whose nodes corresponded to the detectors in the cluster and whose edges were weighted with the distance between the nodes. We then removed edges to obtain a minimum-weight spanning tree for the cluster and only chose corrections corresponding to those edges. This leads to lower-weight corrections which may be logically inequivalent to (say) a random choice. 

As validation of foliated qudit MBQC, we simulated the foliated qudit toric codes under Pauli $Z$ noise using \texttt{sdim}~\cite{kabir2026sdimquditstabilizersimulator}, then used HDRG decoding to calculate the logical error rates and thresholds. We give some calculated thresholds for different qudit dimensions in Table~\ref{tab:example}, and we plot the logical error rate curves leading to the $d=7919$ threshold in Fig.~\ref{fig:decoder}. 
The work \cite{anwar2014fast} applied HDRG decoding to (unrotated) qudit toric codes with only data qudit errors, and \cite{watson2015fast} to (unrotated) qudit surface codes with both data and measurement errors. 
Thus there is no perfect circuit-based comparison for our results, but the closest comparison is to the initialization level 1 curve in Figure 6 of \cite{watson2015fast}, which uses an equivalent error model to ours but considers the unrotated surface code rather than the rotated toric code. 
We find significantly \emph{higher} thresholds than those found in previous work. 
We emphasize that this is \emph{not} due to the difference between MBQC and circuit-based constructions, as the syndrome graphs used for decoding are identical. 
It is likely partially due to the difference between the rotated toric code and unrotated surface code; they should theoretically have the same asymptotic thresholds, but small-size effects in simulation can often lead to different calculated threshold values. 
However, for the ``limiting'' case $d=7919$, our calculated threshold is nearly double that of \cite{watson2015fast}. Thus we believe the main difference is due to our modified version of Step 3 above: by choosing to correct clusters according to a minimum-distance spanning tree, we may be finding lower-weight corrections than in previous work. 

\begin{figure}[H]
    \centering
    \includegraphics[width=.8\linewidth]{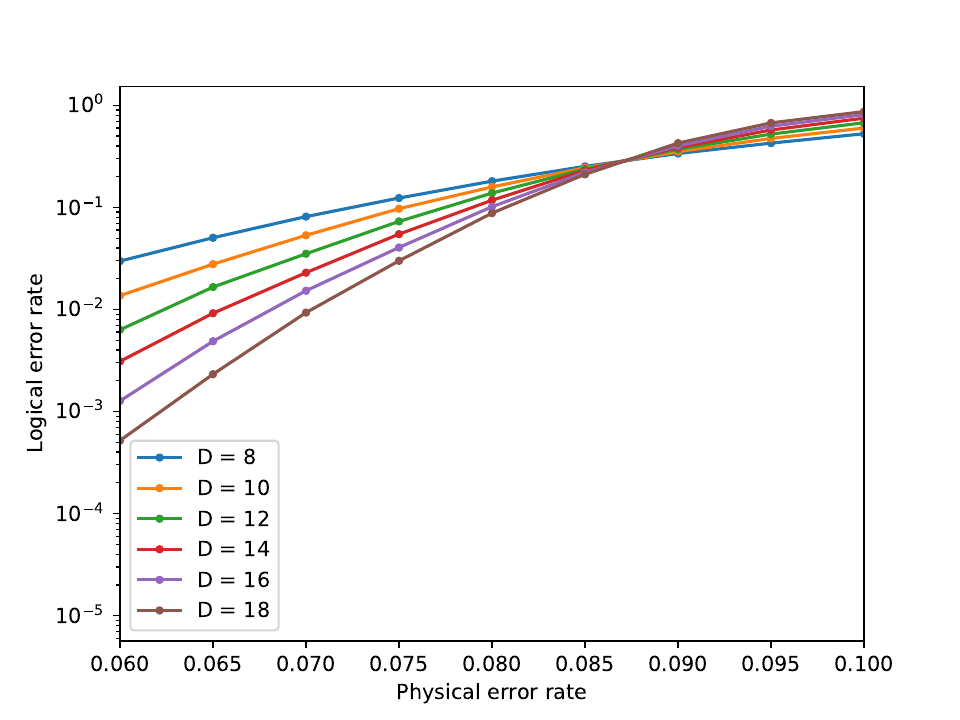}
    \caption{The logical vs. physical error rate of the foliated toric code of qudit dimension $d=7919$, at different lattice sizes $D$. }
    \label{fig:decoder}
\end{figure}

We also note that readers unfamilar with qudits may be surprised that the threshold increases with the qudit dimension. 
Recall that in e.g. minimum-weight perfect matching decoding of \emph{qubit} toric codes, one must often deal with \emph{strings} of (say) Pauli $Z$ errors, in which several adjacent errors create cancelling syndromes that are only nontrivial at the beginning and end of the string. In the qudit case, there are $d-1$ \emph{different} nonzero powers of $Z$ that may occur, so the intermediate syndromes in an error string will rarely cancel. This makes even fairly long strings of errors much easier to correct in the qudit case. 



\begin{table}
\centering
\caption{Simulated error threshold for the qudit toric code of dimension $d$ using our implementation of the HDRG decoder. }
\label{tab:example}
\begin{tabular}{|c|c|} 
 \hline
 \textbf{Qudit dimension d} & \textbf{Error Threshold} \\
 \hline
 3    & 0.023 \\
 \hline
 5    & 0.032  \\
 \hline
 7    & 0.041  \\
 \hline
 7919 & 0.088 \\
 \hline
\end{tabular}
\end{table}

\section{CSS Honeycomb Code for Qudits}\label{app:honeycomb}

\begin{figure} [H]
    \centering
    \includegraphics[width=.8\linewidth]{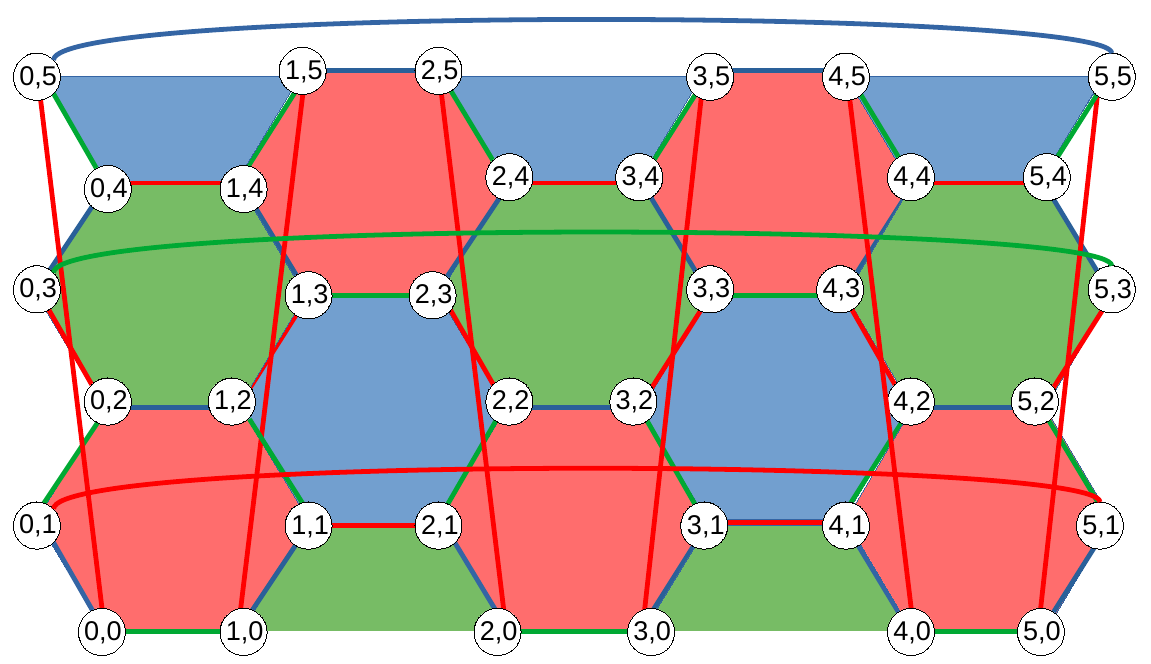}
    \caption{The toric hexagonal lattice used for the CSS honeycomb code \cite{davydova2023floquet} and our qudit generalization.}
    \label{fig:hex lattice}
\end{figure}
As an example of a dynamical code, in Fig.~\ref{fig:examples}(c) we considered the 2D CSS honeycomb code \cite{davydova2023floquet}. 
This code was introduced for qubits, but we observe that it can be straightforwardly generalized to qudits, as follows. 
The data qudits live on a toric hexagonal lattice, as shown in Fig.~\ref{fig:hex lattice}. The faces are $3$-colored red, blue, and green, and the edges are colored correspondingly: red edges have red faces at each endpoint, etc. 
In the notation of \cite{davydova2023floquet} for the qubit case, the period-6 measurement schedule can be written out as $gZZ, bXX, rZZ, gXX, bZZ, rXX$, where $gZZ$ means to measure $Z\otimes Z$ on the green edges, etc. 
For the qudit case, we simply replace $Z\otimes Z$ measurements with $Z\otimes Z^\dagger$ measurements (but leave $X\otimes X$ measurements unchanged). 
This change ensures that, for example, the product of blue $X\otimes X$ checks around a green hexagon (an instantaneous stabilizer of the code) commutes with the following round of red $Z\otimes Z^\dagger$ checks.

\bibliography{refs}

\begin{thebibliography}{43}%
\makeatletter
\providecommand \@ifxundefined [1]{%
 \@ifx{#1\undefined}
}%
\providecommand \@ifnum [1]{%
 \ifnum #1\expandafter \@firstoftwo
 \else \expandafter \@secondoftwo
 \fi
}%
\providecommand \@ifx [1]{%
 \ifx #1\expandafter \@firstoftwo
 \else \expandafter \@secondoftwo
 \fi
}%
\providecommand \natexlab [1]{#1}%
\providecommand \enquote  [1]{``#1''}%
\providecommand \bibnamefont  [1]{#1}%
\providecommand \bibfnamefont [1]{#1}%
\providecommand \citenamefont [1]{#1}%
\providecommand \href@noop [0]{\@secondoftwo}%
\providecommand \href [0]{\begingroup \@sanitize@url \@href}%
\providecommand \@href[1]{\@@startlink{#1}\@@href}%
\providecommand \@@href[1]{\endgroup#1\@@endlink}%
\providecommand \@sanitize@url [0]{\catcode `\\12\catcode `\$12\catcode `\&12\catcode `\#12\catcode `\^12\catcode `\_12\catcode `\%12\relax}%
\providecommand \@@startlink[1]{}%
\providecommand \@@endlink[0]{}%
\providecommand \url  [0]{\begingroup\@sanitize@url \@url }%
\providecommand \@url [1]{\endgroup\@href {#1}{\urlprefix }}%
\providecommand \urlprefix  [0]{URL }%
\providecommand \Eprint [0]{\href }%
\providecommand \doibase [0]{https://doi.org/}%
\providecommand \selectlanguage [0]{\@gobble}%
\providecommand \bibinfo  [0]{\@secondoftwo}%
\providecommand \bibfield  [0]{\@secondoftwo}%
\providecommand \translation [1]{[#1]}%
\providecommand \BibitemOpen [0]{}%
\providecommand \bibitemStop [0]{}%
\providecommand \bibitemNoStop [0]{.\EOS\space}%
\providecommand \EOS [0]{\spacefactor3000\relax}%
\providecommand \BibitemShut  [1]{\csname bibitem#1\endcsname}%
\let\auto@bib@innerbib\@empty
\bibitem [{\citenamefont {Raussendorf}\ and\ \citenamefont {Briegel}(2001)}]{MBQC}%
  \BibitemOpen
  \bibfield  {author} {\bibinfo {author} {\bibfnamefont {R.}~\bibnamefont {Raussendorf}}\ and\ \bibinfo {author} {\bibfnamefont {H.~J.}\ \bibnamefont {Briegel}},\ }\bibfield  {title} {\bibinfo {title} {A one-way quantum computer},\ }\href {https://doi.org/10.1103/PhysRevLett.86.5188} {\bibfield  {journal} {\bibinfo  {journal} {Phys. Rev. Lett.}\ }\textbf {\bibinfo {volume} {86}},\ \bibinfo {pages} {5188} (\bibinfo {year} {2001})}\BibitemShut {NoStop}%
\bibitem [{\citenamefont {Raussendorf}\ \emph {et~al.}(2003)\citenamefont {Raussendorf}, \citenamefont {Browne},\ and\ \citenamefont {Briegel}}]{raussendorf2003measurement}%
  \BibitemOpen
  \bibfield  {author} {\bibinfo {author} {\bibfnamefont {R.}~\bibnamefont {Raussendorf}}, \bibinfo {author} {\bibfnamefont {D.~E.}\ \bibnamefont {Browne}},\ and\ \bibinfo {author} {\bibfnamefont {H.~J.}\ \bibnamefont {Briegel}},\ }\bibfield  {title} {\bibinfo {title} {Measurement-based quantum computation on cluster states},\ }\href@noop {} {\bibfield  {journal} {\bibinfo  {journal} {Physical review A}\ }\textbf {\bibinfo {volume} {68}},\ \bibinfo {pages} {022312} (\bibinfo {year} {2003})}\BibitemShut {NoStop}%
\bibitem [{\citenamefont {Bartolucci}\ \emph {et~al.}(2023)\citenamefont {Bartolucci}, \citenamefont {Birchall}, \citenamefont {Bombín}, \citenamefont {Cable}, \citenamefont {Dawson}, \citenamefont {Gimeno-Segovia}, \citenamefont {Johnston}, \citenamefont {Kieling}, \citenamefont {Nickerson}, \citenamefont {Pant}, \citenamefont {Pastawski}, \citenamefont {Rudolph},\ and\ \citenamefont {Sparrow}}]{Bartolucci2023}%
  \BibitemOpen
  \bibfield  {author} {\bibinfo {author} {\bibfnamefont {S.}~\bibnamefont {Bartolucci}}, \bibinfo {author} {\bibfnamefont {P.}~\bibnamefont {Birchall}}, \bibinfo {author} {\bibfnamefont {H.}~\bibnamefont {Bombín}}, \bibinfo {author} {\bibfnamefont {H.}~\bibnamefont {Cable}}, \bibinfo {author} {\bibfnamefont {C.}~\bibnamefont {Dawson}}, \bibinfo {author} {\bibfnamefont {M.}~\bibnamefont {Gimeno-Segovia}}, \bibinfo {author} {\bibfnamefont {E.}~\bibnamefont {Johnston}}, \bibinfo {author} {\bibfnamefont {K.}~\bibnamefont {Kieling}}, \bibinfo {author} {\bibfnamefont {N.}~\bibnamefont {Nickerson}}, \bibinfo {author} {\bibfnamefont {M.}~\bibnamefont {Pant}}, \bibinfo {author} {\bibfnamefont {F.}~\bibnamefont {Pastawski}}, \bibinfo {author} {\bibfnamefont {T.}~\bibnamefont {Rudolph}},\ and\ \bibinfo {author} {\bibfnamefont {C.}~\bibnamefont {Sparrow}},\ }\bibfield  {title} {\bibinfo {title} {Fusion-based quantum computation},\ }\href {https://doi.org/10.1038/s41467-023-36493-1} {\bibfield  {journal} {\bibinfo
  {journal} {Nature Communications}\ }\textbf {\bibinfo {volume} {14}},\ \bibinfo {pages} {912} (\bibinfo {year} {2023})}\BibitemShut {NoStop}%
\bibitem [{\citenamefont {Bombin}\ \emph {et~al.}(2021)\citenamefont {Bombin}, \citenamefont {Kim}, \citenamefont {Litinski}, \citenamefont {Nickerson}, \citenamefont {Pant}, \citenamefont {Pastawski}, \citenamefont {Roberts},\ and\ \citenamefont {Rudolph}}]{bombin2021interleaving}%
  \BibitemOpen
  \bibfield  {author} {\bibinfo {author} {\bibfnamefont {H.}~\bibnamefont {Bombin}}, \bibinfo {author} {\bibfnamefont {I.~H.}\ \bibnamefont {Kim}}, \bibinfo {author} {\bibfnamefont {D.}~\bibnamefont {Litinski}}, \bibinfo {author} {\bibfnamefont {N.}~\bibnamefont {Nickerson}}, \bibinfo {author} {\bibfnamefont {M.}~\bibnamefont {Pant}}, \bibinfo {author} {\bibfnamefont {F.}~\bibnamefont {Pastawski}}, \bibinfo {author} {\bibfnamefont {S.}~\bibnamefont {Roberts}},\ and\ \bibinfo {author} {\bibfnamefont {T.}~\bibnamefont {Rudolph}},\ }\bibfield  {title} {\bibinfo {title} {Interleaving: Modular architectures for fault-tolerant photonic quantum computing},\ }\href@noop {} {\bibfield  {journal} {\bibinfo  {journal} {arXiv preprint arXiv:2103.08612}\ } (\bibinfo {year} {2021})}\BibitemShut {NoStop}%
\bibitem [{\citenamefont {Bombin}\ \emph {et~al.}(2023)\citenamefont {Bombin}, \citenamefont {Dawson}, \citenamefont {Farrelly}, \citenamefont {Liu}, \citenamefont {Nickerson}, \citenamefont {Pant}, \citenamefont {Pastawski},\ and\ \citenamefont {Roberts}}]{bombin2023faulttolerantcomplexes}%
  \BibitemOpen
  \bibfield  {author} {\bibinfo {author} {\bibfnamefont {H.}~\bibnamefont {Bombin}}, \bibinfo {author} {\bibfnamefont {C.}~\bibnamefont {Dawson}}, \bibinfo {author} {\bibfnamefont {T.}~\bibnamefont {Farrelly}}, \bibinfo {author} {\bibfnamefont {Y.}~\bibnamefont {Liu}}, \bibinfo {author} {\bibfnamefont {N.}~\bibnamefont {Nickerson}}, \bibinfo {author} {\bibfnamefont {M.}~\bibnamefont {Pant}}, \bibinfo {author} {\bibfnamefont {F.}~\bibnamefont {Pastawski}},\ and\ \bibinfo {author} {\bibfnamefont {S.}~\bibnamefont {Roberts}},\ }\href {https://arxiv.org/abs/2308.07844} {\bibinfo {title} {Fault-tolerant complexes}} (\bibinfo {year} {2023}),\ \Eprint {https://arxiv.org/abs/2308.07844} {arXiv:2308.07844 [quant-ph]} \BibitemShut {NoStop}%
\bibitem [{\citenamefont {Broadbent}\ \emph {et~al.}(2009)\citenamefont {Broadbent}, \citenamefont {Fitzsimons},\ and\ \citenamefont {Kashefi}}]{Broadbent_2009}%
  \BibitemOpen
  \bibfield  {author} {\bibinfo {author} {\bibfnamefont {A.}~\bibnamefont {Broadbent}}, \bibinfo {author} {\bibfnamefont {J.}~\bibnamefont {Fitzsimons}},\ and\ \bibinfo {author} {\bibfnamefont {E.}~\bibnamefont {Kashefi}},\ }\bibfield  {title} {\bibinfo {title} {Universal blind quantum computation},\ }in\ \href {https://doi.org/10.1109/focs.2009.36} {\emph {\bibinfo {booktitle} {2009 50th Annual IEEE Symposium on Foundations of Computer Science}}}\ (\bibinfo  {publisher} {IEEE},\ \bibinfo {year} {2009})\ p.\ \bibinfo {pages} {517–526}\BibitemShut {NoStop}%
\bibitem [{\citenamefont {Morimae}\ and\ \citenamefont {Fujii}(2012)}]{morimae2012blind}%
  \BibitemOpen
  \bibfield  {author} {\bibinfo {author} {\bibfnamefont {T.}~\bibnamefont {Morimae}}\ and\ \bibinfo {author} {\bibfnamefont {K.}~\bibnamefont {Fujii}},\ }\bibfield  {title} {\bibinfo {title} {Blind topological measurement-based quantum computation},\ }\href@noop {} {\bibfield  {journal} {\bibinfo  {journal} {Nature communications}\ }\textbf {\bibinfo {volume} {3}},\ \bibinfo {pages} {1036} (\bibinfo {year} {2012})}\BibitemShut {NoStop}%
\bibitem [{\citenamefont {Hayashi}\ and\ \citenamefont {Morimae}(2015)}]{hayashi2015verifiable}%
  \BibitemOpen
  \bibfield  {author} {\bibinfo {author} {\bibfnamefont {M.}~\bibnamefont {Hayashi}}\ and\ \bibinfo {author} {\bibfnamefont {T.}~\bibnamefont {Morimae}},\ }\bibfield  {title} {\bibinfo {title} {Verifiable measurement-only blind quantum computing with stabilizer testing},\ }\href@noop {} {\bibfield  {journal} {\bibinfo  {journal} {Physical review letters}\ }\textbf {\bibinfo {volume} {115}},\ \bibinfo {pages} {220502} (\bibinfo {year} {2015})}\BibitemShut {NoStop}%
\bibitem [{\citenamefont {Fitzsimons}\ and\ \citenamefont {Kashefi}(2017)}]{fitzsimons2017unconditionally}%
  \BibitemOpen
  \bibfield  {author} {\bibinfo {author} {\bibfnamefont {J.~F.}\ \bibnamefont {Fitzsimons}}\ and\ \bibinfo {author} {\bibfnamefont {E.}~\bibnamefont {Kashefi}},\ }\bibfield  {title} {\bibinfo {title} {Unconditionally verifiable blind quantum computation},\ }\href@noop {} {\bibfield  {journal} {\bibinfo  {journal} {Physical Review A}\ }\textbf {\bibinfo {volume} {96}},\ \bibinfo {pages} {012303} (\bibinfo {year} {2017})}\BibitemShut {NoStop}%
\bibitem [{\citenamefont {Raussendorf}\ \emph {et~al.}(2005)\citenamefont {Raussendorf}, \citenamefont {Bravyi},\ and\ \citenamefont {Harrington}}]{Raussendorf2005}%
  \BibitemOpen
  \bibfield  {author} {\bibinfo {author} {\bibfnamefont {R.}~\bibnamefont {Raussendorf}}, \bibinfo {author} {\bibfnamefont {S.}~\bibnamefont {Bravyi}},\ and\ \bibinfo {author} {\bibfnamefont {J.}~\bibnamefont {Harrington}},\ }\bibfield  {title} {\bibinfo {title} {Long-range quantum entanglement in noisy cluster states},\ }\href {https://doi.org/10.1103/PhysRevA.71.062313} {\bibfield  {journal} {\bibinfo  {journal} {Phys. Rev. A}\ }\textbf {\bibinfo {volume} {71}},\ \bibinfo {pages} {062313} (\bibinfo {year} {2005})}\BibitemShut {NoStop}%
\bibitem [{\citenamefont {Raussendorf}\ \emph {et~al.}(2006)\citenamefont {Raussendorf}, \citenamefont {Harrington},\ and\ \citenamefont {Goyal}}]{RAUSSENDORF20062242}%
  \BibitemOpen
  \bibfield  {author} {\bibinfo {author} {\bibfnamefont {R.}~\bibnamefont {Raussendorf}}, \bibinfo {author} {\bibfnamefont {J.}~\bibnamefont {Harrington}},\ and\ \bibinfo {author} {\bibfnamefont {K.}~\bibnamefont {Goyal}},\ }\bibfield  {title} {\bibinfo {title} {A fault-tolerant one-way quantum computer},\ }\href {https://doi.org/https://doi.org/10.1016/j.aop.2006.01.012} {\bibfield  {journal} {\bibinfo  {journal} {Annals of Physics}\ }\textbf {\bibinfo {volume} {321}},\ \bibinfo {pages} {2242} (\bibinfo {year} {2006})}\BibitemShut {NoStop}%
\bibitem [{\citenamefont {Raussendorf}\ and\ \citenamefont {Harrington}(2007)}]{Raussendorf2007}%
  \BibitemOpen
  \bibfield  {author} {\bibinfo {author} {\bibfnamefont {R.}~\bibnamefont {Raussendorf}}\ and\ \bibinfo {author} {\bibfnamefont {J.}~\bibnamefont {Harrington}},\ }\bibfield  {title} {\bibinfo {title} {Fault-tolerant quantum computation with high threshold in two dimensions},\ }\href {https://doi.org/10.1103/PhysRevLett.98.190504} {\bibfield  {journal} {\bibinfo  {journal} {Phys. Rev. Lett.}\ }\textbf {\bibinfo {volume} {98}},\ \bibinfo {pages} {190504} (\bibinfo {year} {2007})}\BibitemShut {NoStop}%
\bibitem [{\citenamefont {Raussendorf}\ \emph {et~al.}(2007)\citenamefont {Raussendorf}, \citenamefont {Harrington},\ and\ \citenamefont {Goyal}}]{Raussendorf_topological}%
  \BibitemOpen
  \bibfield  {author} {\bibinfo {author} {\bibfnamefont {R.}~\bibnamefont {Raussendorf}}, \bibinfo {author} {\bibfnamefont {J.}~\bibnamefont {Harrington}},\ and\ \bibinfo {author} {\bibfnamefont {K.}~\bibnamefont {Goyal}},\ }\bibfield  {title} {\bibinfo {title} {Topological fault-tolerance in cluster state quantum computation},\ }\href {https://doi.org/10.1088/1367-2630/9/6/199} {\bibfield  {journal} {\bibinfo  {journal} {New Journal of Physics}\ }\textbf {\bibinfo {volume} {9}},\ \bibinfo {pages} {199} (\bibinfo {year} {2007})}\BibitemShut {NoStop}%
\bibitem [{\citenamefont {Bolt}\ \emph {et~al.}(2016)\citenamefont {Bolt}, \citenamefont {Duclos-Cianci}, \citenamefont {Poulin},\ and\ \citenamefont {Stace}}]{bolt2016foliated}%
  \BibitemOpen
  \bibfield  {author} {\bibinfo {author} {\bibfnamefont {A.}~\bibnamefont {Bolt}}, \bibinfo {author} {\bibfnamefont {G.}~\bibnamefont {Duclos-Cianci}}, \bibinfo {author} {\bibfnamefont {D.}~\bibnamefont {Poulin}},\ and\ \bibinfo {author} {\bibfnamefont {T.}~\bibnamefont {Stace}},\ }\bibfield  {title} {\bibinfo {title} {Foliated quantum error-correcting codes},\ }\href@noop {} {\bibfield  {journal} {\bibinfo  {journal} {Physical review letters}\ }\textbf {\bibinfo {volume} {117}},\ \bibinfo {pages} {070501} (\bibinfo {year} {2016})}\BibitemShut {NoStop}%
\bibitem [{\citenamefont {Brown}\ and\ \citenamefont {Roberts}(2020)}]{brown2020universal}%
  \BibitemOpen
  \bibfield  {author} {\bibinfo {author} {\bibfnamefont {B.~J.}\ \bibnamefont {Brown}}\ and\ \bibinfo {author} {\bibfnamefont {S.}~\bibnamefont {Roberts}},\ }\bibfield  {title} {\bibinfo {title} {Universal fault-tolerant measurement-based quantum computation},\ }\href@noop {} {\bibfield  {journal} {\bibinfo  {journal} {Physical Review Research}\ }\textbf {\bibinfo {volume} {2}},\ \bibinfo {pages} {033305} (\bibinfo {year} {2020})}\BibitemShut {NoStop}%
\bibitem [{\citenamefont {Andriyanova}\ \emph {et~al.}(2012)\citenamefont {Andriyanova}, \citenamefont {Maurice},\ and\ \citenamefont {Tillich}}]{andriyanova2012new}%
  \BibitemOpen
  \bibfield  {author} {\bibinfo {author} {\bibfnamefont {I.}~\bibnamefont {Andriyanova}}, \bibinfo {author} {\bibfnamefont {D.}~\bibnamefont {Maurice}},\ and\ \bibinfo {author} {\bibfnamefont {J.-P.}\ \bibnamefont {Tillich}},\ }\bibfield  {title} {\bibinfo {title} {New constructions of css codes obtained by moving to higher alphabets},\ }\href@noop {} {\bibfield  {journal} {\bibinfo  {journal} {arXiv preprint arXiv:1202.3338}\ } (\bibinfo {year} {2012})}\BibitemShut {NoStop}%
\bibitem [{\citenamefont {Anwar}\ \emph {et~al.}(2014)\citenamefont {Anwar}, \citenamefont {Brown}, \citenamefont {Campbell},\ and\ \citenamefont {Browne}}]{anwar2014fast}%
  \BibitemOpen
  \bibfield  {author} {\bibinfo {author} {\bibfnamefont {H.}~\bibnamefont {Anwar}}, \bibinfo {author} {\bibfnamefont {B.~J.}\ \bibnamefont {Brown}}, \bibinfo {author} {\bibfnamefont {E.~T.}\ \bibnamefont {Campbell}},\ and\ \bibinfo {author} {\bibfnamefont {D.~E.}\ \bibnamefont {Browne}},\ }\bibfield  {title} {\bibinfo {title} {Fast decoders for qudit topological codes},\ }\href@noop {} {\bibfield  {journal} {\bibinfo  {journal} {New Journal of Physics}\ }\textbf {\bibinfo {volume} {16}},\ \bibinfo {pages} {063038} (\bibinfo {year} {2014})}\BibitemShut {NoStop}%
\bibitem [{\citenamefont {Watson}\ \emph {et~al.}(2015)\citenamefont {Watson}, \citenamefont {Anwar},\ and\ \citenamefont {Browne}}]{watson2015fast}%
  \BibitemOpen
  \bibfield  {author} {\bibinfo {author} {\bibfnamefont {F.~H.}\ \bibnamefont {Watson}}, \bibinfo {author} {\bibfnamefont {H.}~\bibnamefont {Anwar}},\ and\ \bibinfo {author} {\bibfnamefont {D.~E.}\ \bibnamefont {Browne}},\ }\bibfield  {title} {\bibinfo {title} {Fast fault-tolerant decoder for qubit and qudit surface codes},\ }\href@noop {} {\bibfield  {journal} {\bibinfo  {journal} {Physical Review A}\ }\textbf {\bibinfo {volume} {92}},\ \bibinfo {pages} {032309} (\bibinfo {year} {2015})}\BibitemShut {NoStop}%
\bibitem [{\citenamefont {Brock}\ \emph {et~al.}(2025)\citenamefont {Brock}, \citenamefont {Singh}, \citenamefont {Eickbusch}, \citenamefont {Sivak}, \citenamefont {Ding}, \citenamefont {Frunzio}, \citenamefont {Girvin},\ and\ \citenamefont {Devoret}}]{Brock_2025}%
  \BibitemOpen
  \bibfield  {author} {\bibinfo {author} {\bibfnamefont {B.~L.}\ \bibnamefont {Brock}}, \bibinfo {author} {\bibfnamefont {S.}~\bibnamefont {Singh}}, \bibinfo {author} {\bibfnamefont {A.}~\bibnamefont {Eickbusch}}, \bibinfo {author} {\bibfnamefont {V.~V.}\ \bibnamefont {Sivak}}, \bibinfo {author} {\bibfnamefont {A.~Z.}\ \bibnamefont {Ding}}, \bibinfo {author} {\bibfnamefont {L.}~\bibnamefont {Frunzio}}, \bibinfo {author} {\bibfnamefont {S.~M.}\ \bibnamefont {Girvin}},\ and\ \bibinfo {author} {\bibfnamefont {M.~H.}\ \bibnamefont {Devoret}},\ }\bibfield  {title} {\bibinfo {title} {Quantum error correction of qudits beyond break-even},\ }\href {https://doi.org/10.1038/s41586-025-08899-y} {\bibfield  {journal} {\bibinfo  {journal} {Nature}\ }\textbf {\bibinfo {volume} {641}},\ \bibinfo {pages} {612–618} (\bibinfo {year} {2025})}\BibitemShut {NoStop}%
\bibitem [{\citenamefont {Ringbauer}\ \emph {et~al.}(2022)\citenamefont {Ringbauer}, \citenamefont {Meth}, \citenamefont {Postler}, \citenamefont {Stricker}, \citenamefont {Blatt}, \citenamefont {Schindler},\ and\ \citenamefont {Monz}}]{Ringbauer2022}%
  \BibitemOpen
  \bibfield  {author} {\bibinfo {author} {\bibfnamefont {M.}~\bibnamefont {Ringbauer}}, \bibinfo {author} {\bibfnamefont {M.}~\bibnamefont {Meth}}, \bibinfo {author} {\bibfnamefont {L.}~\bibnamefont {Postler}}, \bibinfo {author} {\bibfnamefont {R.}~\bibnamefont {Stricker}}, \bibinfo {author} {\bibfnamefont {R.}~\bibnamefont {Blatt}}, \bibinfo {author} {\bibfnamefont {P.}~\bibnamefont {Schindler}},\ and\ \bibinfo {author} {\bibfnamefont {T.}~\bibnamefont {Monz}},\ }\bibfield  {title} {\bibinfo {title} {A universal qudit quantum processor with trapped ions},\ }\href {https://doi.org/10.1038/s41567-022-01658-0} {\bibfield  {journal} {\bibinfo  {journal} {Nature Physics}\ }\textbf {\bibinfo {volume} {18}},\ \bibinfo {pages} {1053} (\bibinfo {year} {2022})}\BibitemShut {NoStop}%
\bibitem [{\citenamefont {Low}\ \emph {et~al.}(2025)\citenamefont {Low}, \citenamefont {White},\ and\ \citenamefont {Senko}}]{Low2025}%
  \BibitemOpen
  \bibfield  {author} {\bibinfo {author} {\bibfnamefont {P.~J.}\ \bibnamefont {Low}}, \bibinfo {author} {\bibfnamefont {B.}~\bibnamefont {White}},\ and\ \bibinfo {author} {\bibfnamefont {C.}~\bibnamefont {Senko}},\ }\bibfield  {title} {\bibinfo {title} {Control and readout of a 13-level trapped ion qudit},\ }\href {https://doi.org/10.1038/s41534-025-01031-y} {\bibfield  {journal} {\bibinfo  {journal} {npj Quantum Information}\ }\textbf {\bibinfo {volume} {11}},\ \bibinfo {pages} {85} (\bibinfo {year} {2025})}\BibitemShut {NoStop}%
\bibitem [{\citenamefont {Zhou}\ \emph {et~al.}(2003)\citenamefont {Zhou}, \citenamefont {Zeng}, \citenamefont {Xu},\ and\ \citenamefont {Sun}}]{Zhou_2003}%
  \BibitemOpen
  \bibfield  {author} {\bibinfo {author} {\bibfnamefont {D.~L.}\ \bibnamefont {Zhou}}, \bibinfo {author} {\bibfnamefont {B.}~\bibnamefont {Zeng}}, \bibinfo {author} {\bibfnamefont {Z.}~\bibnamefont {Xu}},\ and\ \bibinfo {author} {\bibfnamefont {C.~P.}\ \bibnamefont {Sun}},\ }\bibfield  {title} {\bibinfo {title} {Quantum computation based on<i>d</i>-level cluster state},\ }\bibfield  {journal} {\bibinfo  {journal} {Physical Review A}\ }\textbf {\bibinfo {volume} {68}},\ \href {https://doi.org/10.1103/physreva.68.062303} {10.1103/physreva.68.062303} (\bibinfo {year} {2003})\BibitemShut {NoStop}%
\bibitem [{\citenamefont {Booth}\ \emph {et~al.}(2023)\citenamefont {Booth}, \citenamefont {Kissinger}, \citenamefont {Markham}, \citenamefont {Meignant},\ and\ \citenamefont {Perdrix}}]{booth2023outcome}%
  \BibitemOpen
  \bibfield  {author} {\bibinfo {author} {\bibfnamefont {R.~I.}\ \bibnamefont {Booth}}, \bibinfo {author} {\bibfnamefont {A.}~\bibnamefont {Kissinger}}, \bibinfo {author} {\bibfnamefont {D.}~\bibnamefont {Markham}}, \bibinfo {author} {\bibfnamefont {C.}~\bibnamefont {Meignant}},\ and\ \bibinfo {author} {\bibfnamefont {S.}~\bibnamefont {Perdrix}},\ }\bibfield  {title} {\bibinfo {title} {Outcome determinism in measurement-based quantum computation with qudits},\ }\href@noop {} {\bibfield  {journal} {\bibinfo  {journal} {Journal of Physics A: Mathematical and Theoretical}\ }\textbf {\bibinfo {volume} {56}},\ \bibinfo {pages} {115303} (\bibinfo {year} {2023})}\BibitemShut {NoStop}%
\bibitem [{\citenamefont {Romanova}\ and\ \citenamefont {Dür}(2026)}]{Romanova_2026}%
  \BibitemOpen
  \bibfield  {author} {\bibinfo {author} {\bibfnamefont {A.}~\bibnamefont {Romanova}}\ and\ \bibinfo {author} {\bibfnamefont {W.}~\bibnamefont {Dür}},\ }\bibfield  {title} {\bibinfo {title} {Measurement-based quantum computing with qudit stabilizer states},\ }\href {https://doi.org/10.1088/2058-9565/ae3b6f} {\bibfield  {journal} {\bibinfo  {journal} {Quantum Science and Technology}\ }\textbf {\bibinfo {volume} {11}},\ \bibinfo {pages} {015054} (\bibinfo {year} {2026})}\BibitemShut {NoStop}%
\bibitem [{\citenamefont {Chau}(1997)}]{chau1997five}%
  \BibitemOpen
  \bibfield  {author} {\bibinfo {author} {\bibfnamefont {H.}~\bibnamefont {Chau}},\ }\bibfield  {title} {\bibinfo {title} {Five quantum register error correction code for higher spin systems},\ }\href@noop {} {\bibfield  {journal} {\bibinfo  {journal} {Physical Review A}\ }\textbf {\bibinfo {volume} {56}},\ \bibinfo {pages} {R1} (\bibinfo {year} {1997})}\BibitemShut {NoStop}%
\bibitem [{\citenamefont {Davydova}\ \emph {et~al.}(2023)\citenamefont {Davydova}, \citenamefont {Tantivasadakarn},\ and\ \citenamefont {Balasubramanian}}]{davydova2023floquet}%
  \BibitemOpen
  \bibfield  {author} {\bibinfo {author} {\bibfnamefont {M.}~\bibnamefont {Davydova}}, \bibinfo {author} {\bibfnamefont {N.}~\bibnamefont {Tantivasadakarn}},\ and\ \bibinfo {author} {\bibfnamefont {S.}~\bibnamefont {Balasubramanian}},\ }\bibfield  {title} {\bibinfo {title} {Floquet codes without parent subsystem codes},\ }\href@noop {} {\bibfield  {journal} {\bibinfo  {journal} {PRX Quantum}\ }\textbf {\bibinfo {volume} {4}},\ \bibinfo {pages} {020341} (\bibinfo {year} {2023})}\BibitemShut {NoStop}%
\bibitem [{\citenamefont {Helwig}(2013)}]{helwig2013absolutelymaximallyentangledqudit}%
  \BibitemOpen
  \bibfield  {author} {\bibinfo {author} {\bibfnamefont {W.}~\bibnamefont {Helwig}},\ }\href {https://arxiv.org/abs/1306.2879} {\bibinfo {title} {Absolutely maximally entangled qudit graph states}} (\bibinfo {year} {2013}),\ \Eprint {https://arxiv.org/abs/1306.2879} {arXiv:1306.2879 [quant-ph]} \BibitemShut {NoStop}%
\bibitem [{\citenamefont {Gottesman}(1998)}]{gottesman1998fault}%
  \BibitemOpen
  \bibfield  {author} {\bibinfo {author} {\bibfnamefont {D.}~\bibnamefont {Gottesman}},\ }\bibfield  {title} {\bibinfo {title} {Fault-tolerant quantum computation with higher-dimensional systems},\ }in\ \href@noop {} {\emph {\bibinfo {booktitle} {NASA International Conference on Quantum Computing and Quantum Communications}}}\ (\bibinfo {organization} {Springer},\ \bibinfo {year} {1998})\ pp.\ \bibinfo {pages} {302--313}\BibitemShut {NoStop}%
\bibitem [{\citenamefont {Poulin}(2005)}]{poulin2005stabilizer}%
  \BibitemOpen
  \bibfield  {author} {\bibinfo {author} {\bibfnamefont {D.}~\bibnamefont {Poulin}},\ }\bibfield  {title} {\bibinfo {title} {Stabilizer formalism for operator quantum error correction},\ }\href@noop {} {\bibfield  {journal} {\bibinfo  {journal} {Physical review letters}\ }\textbf {\bibinfo {volume} {95}},\ \bibinfo {pages} {230504} (\bibinfo {year} {2005})}\BibitemShut {NoStop}%
\bibitem [{\citenamefont {Fu}\ and\ \citenamefont {Gottesman}(2025)}]{fu2025error}%
  \BibitemOpen
  \bibfield  {author} {\bibinfo {author} {\bibfnamefont {E.~X.}\ \bibnamefont {Fu}}\ and\ \bibinfo {author} {\bibfnamefont {D.}~\bibnamefont {Gottesman}},\ }\bibfield  {title} {\bibinfo {title} {Error correction in dynamical codes},\ }\href@noop {} {\bibfield  {journal} {\bibinfo  {journal} {Quantum}\ }\textbf {\bibinfo {volume} {9}},\ \bibinfo {pages} {1886} (\bibinfo {year} {2025})}\BibitemShut {NoStop}%
\bibitem [{\citenamefont {Bullock}\ and\ \citenamefont {Brennen}(2007)}]{Bullock_2007}%
  \BibitemOpen
  \bibfield  {author} {\bibinfo {author} {\bibfnamefont {S.~S.}\ \bibnamefont {Bullock}}\ and\ \bibinfo {author} {\bibfnamefont {G.~K.}\ \bibnamefont {Brennen}},\ }\bibfield  {title} {\bibinfo {title} {Qudit surface codes and gauge theory with finite cyclic groups},\ }\href {https://doi.org/10.1088/1751-8113/40/13/013} {\bibfield  {journal} {\bibinfo  {journal} {Journal of Physics A: Mathematical and Theoretical}\ }\textbf {\bibinfo {volume} {40}},\ \bibinfo {pages} {3481–3505} (\bibinfo {year} {2007})}\BibitemShut {NoStop}%
\bibitem [{\citenamefont {Gimeno-Segovia}\ \emph {et~al.}(2019)\citenamefont {Gimeno-Segovia}, \citenamefont {Rudolph},\ and\ \citenamefont {Economou}}]{gimeno2019deterministic}%
  \BibitemOpen
  \bibfield  {author} {\bibinfo {author} {\bibfnamefont {M.}~\bibnamefont {Gimeno-Segovia}}, \bibinfo {author} {\bibfnamefont {T.}~\bibnamefont {Rudolph}},\ and\ \bibinfo {author} {\bibfnamefont {S.~E.}\ \bibnamefont {Economou}},\ }\bibfield  {title} {\bibinfo {title} {Deterministic generation of large-scale entangled photonic cluster state from interacting solid state emitters},\ }\href@noop {} {\bibfield  {journal} {\bibinfo  {journal} {Physical review letters}\ }\textbf {\bibinfo {volume} {123}},\ \bibinfo {pages} {070501} (\bibinfo {year} {2019})}\BibitemShut {NoStop}%
\bibitem [{\citenamefont {Raissi}\ \emph {et~al.}(2024)\citenamefont {Raissi}, \citenamefont {Barnes},\ and\ \citenamefont {Economou}}]{raissi2024deterministic}%
  \BibitemOpen
  \bibfield  {author} {\bibinfo {author} {\bibfnamefont {Z.}~\bibnamefont {Raissi}}, \bibinfo {author} {\bibfnamefont {E.}~\bibnamefont {Barnes}},\ and\ \bibinfo {author} {\bibfnamefont {S.~E.}\ \bibnamefont {Economou}},\ }\bibfield  {title} {\bibinfo {title} {Deterministic generation of qudit photonic graph states from quantum emitters},\ }\href@noop {} {\bibfield  {journal} {\bibinfo  {journal} {PRX Quantum}\ }\textbf {\bibinfo {volume} {5}},\ \bibinfo {pages} {020346} (\bibinfo {year} {2024})}\BibitemShut {NoStop}%
\bibitem [{\citenamefont {\"Ust\"un}\ and\ \citenamefont {Devitt}(2026)}]{comparing_schemes}%
  \BibitemOpen
  \bibfield  {author} {\bibinfo {author} {\bibfnamefont {G.}~\bibnamefont {\"Ust\"un}}\ and\ \bibinfo {author} {\bibfnamefont {S.~J.}\ \bibnamefont {Devitt}},\ }\bibfield  {title} {\bibinfo {title} {Comparing schemes for creating qudit graph states from 16- and 128-dimensional hilbert space using donors in silicon},\ }\href {https://doi.org/10.1103/dcvl-sdxq} {\bibfield  {journal} {\bibinfo  {journal} {Phys. Rev. Res.}\ }\textbf {\bibinfo {volume} {8}},\ \bibinfo {pages} {013343} (\bibinfo {year} {2026})}\BibitemShut {NoStop}%
\bibitem [{\citenamefont {Lee}\ and\ \citenamefont {Jeong}(2023)}]{lee2023graph}%
  \BibitemOpen
  \bibfield  {author} {\bibinfo {author} {\bibfnamefont {S.-H.}\ \bibnamefont {Lee}}\ and\ \bibinfo {author} {\bibfnamefont {H.}~\bibnamefont {Jeong}},\ }\bibfield  {title} {\bibinfo {title} {Graph-theoretical optimization of fusion-based graph state generation},\ }\href@noop {} {\bibfield  {journal} {\bibinfo  {journal} {Quantum}\ }\textbf {\bibinfo {volume} {7}},\ \bibinfo {pages} {1212} (\bibinfo {year} {2023})}\BibitemShut {NoStop}%
\bibitem [{\citenamefont {Pankovich}\ \emph {et~al.}(2024)\citenamefont {Pankovich}, \citenamefont {Neville}, \citenamefont {Kan}, \citenamefont {Omkar}, \citenamefont {Wan},\ and\ \citenamefont {Br{\'a}dler}}]{pankovich2024flexible}%
  \BibitemOpen
  \bibfield  {author} {\bibinfo {author} {\bibfnamefont {B.}~\bibnamefont {Pankovich}}, \bibinfo {author} {\bibfnamefont {A.}~\bibnamefont {Neville}}, \bibinfo {author} {\bibfnamefont {A.}~\bibnamefont {Kan}}, \bibinfo {author} {\bibfnamefont {S.}~\bibnamefont {Omkar}}, \bibinfo {author} {\bibfnamefont {K.~H.}\ \bibnamefont {Wan}},\ and\ \bibinfo {author} {\bibfnamefont {K.}~\bibnamefont {Br{\'a}dler}},\ }\bibfield  {title} {\bibinfo {title} {Flexible entangled-state generation in linear optics},\ }\href@noop {} {\bibfield  {journal} {\bibinfo  {journal} {Physical Review A}\ }\textbf {\bibinfo {volume} {110}},\ \bibinfo {pages} {032402} (\bibinfo {year} {2024})}\BibitemShut {NoStop}%
\bibitem [{\citenamefont {L{\"o}bl}\ \emph {et~al.}(2025)\citenamefont {L{\"o}bl}, \citenamefont {Pettersson}, \citenamefont {Paesani},\ and\ \citenamefont {S{\o}rensen}}]{lobl2025transforming}%
  \BibitemOpen
  \bibfield  {author} {\bibinfo {author} {\bibfnamefont {M.~C.}\ \bibnamefont {L{\"o}bl}}, \bibinfo {author} {\bibfnamefont {L.~A.}\ \bibnamefont {Pettersson}}, \bibinfo {author} {\bibfnamefont {S.}~\bibnamefont {Paesani}},\ and\ \bibinfo {author} {\bibfnamefont {A.~S.}\ \bibnamefont {S{\o}rensen}},\ }\bibfield  {title} {\bibinfo {title} {Transforming graph states via bell state measurements},\ }\href@noop {} {\bibfield  {journal} {\bibinfo  {journal} {Quantum}\ }\textbf {\bibinfo {volume} {9}},\ \bibinfo {pages} {1795} (\bibinfo {year} {2025})}\BibitemShut {NoStop}%
\bibitem [{\citenamefont {Browne}\ and\ \citenamefont {Rudolph}(2005)}]{browne_rudolph}%
  \BibitemOpen
  \bibfield  {author} {\bibinfo {author} {\bibfnamefont {D.~E.}\ \bibnamefont {Browne}}\ and\ \bibinfo {author} {\bibfnamefont {T.}~\bibnamefont {Rudolph}},\ }\bibfield  {title} {\bibinfo {title} {Resource-efficient linear optical quantum computation},\ }\href {https://doi.org/10.1103/PhysRevLett.95.010501} {\bibfield  {journal} {\bibinfo  {journal} {Phys. Rev. Lett.}\ }\textbf {\bibinfo {volume} {95}},\ \bibinfo {pages} {010501} (\bibinfo {year} {2005})}\BibitemShut {NoStop}%
\bibitem [{\citenamefont {{\"U}st{\"u}n}\ \emph {et~al.}(2025)\citenamefont {{\"U}st{\"u}n}, \citenamefont {Rieffel}, \citenamefont {Devitt},\ and\ \citenamefont {Saied}}]{ustun2025fusion}%
  \BibitemOpen
  \bibfield  {author} {\bibinfo {author} {\bibfnamefont {G.}~\bibnamefont {{\"U}st{\"u}n}}, \bibinfo {author} {\bibfnamefont {E.~G.}\ \bibnamefont {Rieffel}}, \bibinfo {author} {\bibfnamefont {S.~J.}\ \bibnamefont {Devitt}},\ and\ \bibinfo {author} {\bibfnamefont {J.}~\bibnamefont {Saied}},\ }\bibfield  {title} {\bibinfo {title} {Fusion for high-dimensional linear-optical quantum computing with improved success probability},\ }\href@noop {} {\bibfield  {journal} {\bibinfo  {journal} {Physical Review Applied}\ }\textbf {\bibinfo {volume} {24}},\ \bibinfo {pages} {044024} (\bibinfo {year} {2025})}\BibitemShut {NoStop}%
\bibitem [{\citenamefont {Kabir}\ \emph {et~al.}(2026)\citenamefont {Kabir}, \citenamefont {Nguyen}, \citenamefont {Ghosh}, \citenamefont {Keppens}, \citenamefont {Kiran}, \citenamefont {Kim}, \citenamefont {Huang},\ and\ \citenamefont {Sorée}}]{kabir2026sdimquditstabilizersimulator}%
  \BibitemOpen
  \bibfield  {author} {\bibinfo {author} {\bibfnamefont {A.}~\bibnamefont {Kabir}}, \bibinfo {author} {\bibfnamefont {S.}~\bibnamefont {Nguyen}}, \bibinfo {author} {\bibfnamefont {S.}~\bibnamefont {Ghosh}}, \bibinfo {author} {\bibfnamefont {J.}~\bibnamefont {Keppens}}, \bibinfo {author} {\bibfnamefont {T.}~\bibnamefont {Kiran}}, \bibinfo {author} {\bibfnamefont {I.~H.}\ \bibnamefont {Kim}}, \bibinfo {author} {\bibfnamefont {Y.}~\bibnamefont {Huang}},\ and\ \bibinfo {author} {\bibfnamefont {B.}~\bibnamefont {Sorée}},\ }\href {https://arxiv.org/abs/2511.12777} {\bibinfo {title} {Sdim: A qudit stabilizer simulator}} (\bibinfo {year} {2026}),\ \Eprint {https://arxiv.org/abs/2511.12777} {arXiv:2511.12777 [quant-ph]} \BibitemShut {NoStop}%
\bibitem [{\citenamefont {Bahramgiri}\ and\ \citenamefont {Beigi}(2006)}]{bahramgiri2006graph}%
  \BibitemOpen
  \bibfield  {author} {\bibinfo {author} {\bibfnamefont {M.}~\bibnamefont {Bahramgiri}}\ and\ \bibinfo {author} {\bibfnamefont {S.}~\bibnamefont {Beigi}},\ }\bibfield  {title} {\bibinfo {title} {Graph states under the action of local clifford group in non-binary case},\ }\href@noop {} {\bibfield  {journal} {\bibinfo  {journal} {arXiv preprint quant-ph/0610267}\ } (\bibinfo {year} {2006})}\BibitemShut {NoStop}%
\bibitem [{\citenamefont {Zhou}\ \emph {et~al.}(2000)\citenamefont {Zhou}, \citenamefont {Leung},\ and\ \citenamefont {Chuang}}]{zhou2000methodology}%
  \BibitemOpen
  \bibfield  {author} {\bibinfo {author} {\bibfnamefont {X.}~\bibnamefont {Zhou}}, \bibinfo {author} {\bibfnamefont {D.~W.}\ \bibnamefont {Leung}},\ and\ \bibinfo {author} {\bibfnamefont {I.~L.}\ \bibnamefont {Chuang}},\ }\bibfield  {title} {\bibinfo {title} {Methodology for quantum logic gate construction},\ }\href@noop {} {\bibfield  {journal} {\bibinfo  {journal} {Physical Review A}\ }\textbf {\bibinfo {volume} {62}},\ \bibinfo {pages} {052316} (\bibinfo {year} {2000})}\BibitemShut {NoStop}%
\bibitem [{\citenamefont {Nielsen}(2006)}]{nielsen2006cluster}%
  \BibitemOpen
  \bibfield  {author} {\bibinfo {author} {\bibfnamefont {M.~A.}\ \bibnamefont {Nielsen}},\ }\bibfield  {title} {\bibinfo {title} {Cluster-state quantum computation},\ }\href@noop {} {\bibfield  {journal} {\bibinfo  {journal} {Reports on Mathematical Physics}\ }\textbf {\bibinfo {volume} {57}},\ \bibinfo {pages} {147} (\bibinfo {year} {2006})}\BibitemShut {NoStop}%
\end{thebibliography}%

\newpage
\onecolumngrid
\section*{Supplementary Materials}
\setcounter{section}{0}
\setcounter{equation}{0}
\setcounter{figure}{0}
\renewcommand{\thesection}{S-\Alph{section}}
\renewcommand{\theequation}{S.\arabic{equation}}
\renewcommand{\thefigure}{S.\arabic{figure}}

Sec.~\ref{sec:non css theory} formalizes the MBQC construction for the non-CSS case. The remaining sections give less formal expositions of the main ideas and discuss examples. 

\section{Foliating Non-CSS Codes}\label{sec:non css theory}
We now formally discuss the non-CSS version of Definition~\ref{def:css graph state} above. We give further exposition of the theory in Sec.~\ref{sec:non css example} and discuss the example of the $5$-qudit perfect code in Sec.~\ref{sec:perfect}. 

In the CSS case, we had $T/2$ pairs of ``layers'' or ``time steps,'' with each pair corresponding to measurement of $Z$ checks followed by measurement of $X$ checks. 
In the non-CSS setting, where checks cannot be separated into $X$ and $Z$ type, these pairs of layers will be combined into a single \emph{measurement round}. 
We will thus refer to having $K$ rounds rather than $T$ time steps. 

We consider a Pauli-based code (possibly dynamical) involving $N$ data qudits of dimension $d$. 
For $0\leq k\leq K$, we assume we are given a $r(k)\times 2N$ generator matrix $\tilde{H}(k)$. 
(We use the tilde notation to avoid confusion of notation with the CSS case above.) 
The rows of $\tilde{H}(k)$ specify the checks measured in round $k$; specifically, recalling the expression of Paulis as $X^{p_X}Z^{p_Z}$, the first $N$ entries in a row specify $p_Z$, and the final $N$ entries specify $p_X$. 
(In the CSS case, we obtain a block diagonal matrix with the blocks encoding the separate $Z$ and $X$ checks.) 

\begin{definition}\label{def:non css}
Given a code specified by $d, N, K$, and $\tilde{H}(k)$ as above, we construct the corresponding graph state $\tilde{\Lambda}=\tilde{\Lambda}(d,N,K,\tilde{H})$: 
\begin{enumerate}
    \item We have the set of \emph{data qudit nodes} 
        $\mathcal{D} = \{(0,q,t): 1\leq q \leq N, 0\leq t \leq 2K\}$.
    For $t>0$, we add an edge from $(0,q,t-1)$ to $(0,q,t)$, with weight $(-1)^t$. 
    \item For $0\leq k \leq K$, we have the set of \emph{ancilla qudit nodes}
        $\mathcal{A}_k = \{(1, c, k): 1\leq c \leq r(t)\}$,
    with each $c$ corresponding to a check in measurement round $k$ (a row of $\tilde{H}(k)$). 
    We add the following edges: 
    \begin{enumerate}
        \item For each such node $(1,c,k)$, and each data qudit $q$ in the support of $c$, let $a=H(k)_{c,q}$ and $b=H(k)_{c,q+N}$, so that the part of the check supported on $q$ has the form $X^b Z^a$. 
        We add an edge to $(0,q,2k)$ with weight $a$ and an edge to $(0,q,2k+1)$ with weight $b$. 
        \item For \emph{every} pair of distinct checks $c_p, c_q$ measured during the same round $k$, 
        corresponding to checks $X^{p_X}Z^{p_Z}$ and $X^{q_X}Z^{q_Z}$ respectively, 
        we add an edge of weight $p_X\cdot q_Z= p_Z\cdot q_X$ between nodes $(1,c_p,k)$ and $(1,c_q,k)$. 
    \end{enumerate}
    
\end{enumerate}
The measurement-based quantum computation proceeds by measuring (one round at a time, in order) $XZ^{-p_X\cdot p_Z}$ on each ancillary qudit corresponding to a check $X^{p_X}Z^{p_Z}$, and $X$ on each data qudit.  
\end{definition}

The detectors may be constructed similarly to the CSS case. 
For example, in the stabilizer case, 
we have the following straightforward generalization of Definition~\ref{def:detector}. 
For each ancilla node $(1,c,k)$ with $k+1\leq K$ and $c=X^{p_X}Z^{p_Z}$, the associated detector is obtained by: 
\begin{equation}\label{eq:non css detector}
    D(c,k)=(XZ^{-p_X\cdot p_Z})_{(1,c,k)}(XZ^{-p_X\cdot p_Z})_{(1,c,k+1)}^{-1} \times \left(\prod_{q=0}^{N-1} X_{(0,q,2k+1)}^{(p_Z)_{q}}\right)\times \left(\prod_{q=0}^{N-1} X_{(0,q,2k+2)}^{-(p_X)_{q}}\right).
\end{equation}
Recall that in the CSS case, $Z$-type detectors starting in time step $t=2k$ involved data qudits in time step $2k+1$, and similarly $X$-type detectors starting in time step $t=2k+1$ involved data qudits in time step $2k+2$. 
The same principle applied here: the detector is supported on data qudits $(0,q,2k+1)$ because of the $Z$ part of the check, and it is supported on data qudits $(0,q,2k+2)$ because of the $X$ part of the check. 

As in the CSS case, the detectors for a non-CSS subsystem code are obtained by appropriately multiplying detectors of the form \eqref{eq:non css detector}, and the detectors for a dynamical code are similar in spirit but may involve more than two measurement rounds. 

In this work, we considered an error model in which Pauli errors occur after preparation of the graph state $\tilde{\Lambda}$.  
In the CSS case, since we measured $X$ on every qudit, only Pauli $Z$ errors were relevant. 
In the non-CSS setting, we still measure $X$ on every data qudit, so the only relevant data qudit errors are $Z$ errors. 
On the ancillas, we measure more general observables $P = XZ^{b}$ (where $b=-p_X\cdot p_Z$ as in Definition~\ref{def:non css}). Let $\ket{\phi_0}$ be the $+1$-eigenstate of $P$; since $PZ = \omega^{-1}ZP$, we have $PZ^k\ket{\phi_0} = \omega^{-k}Z^k\ket{\phi_0}$, so $Z$ cycles between the eigenstates of $P$, just as in the CSS case. In other words, to model Pauli errors on the ancillas (corresponding to measurement errors), it suffices to consider Pauli $Z$ errors. 

\section{Exposition of Qudit Graph States}\label{sec:graph states}
We briefly review essential properties of qudit graph states. For a more detailed exposition with similar notation, we refer to \cite{helwig2013absolutelymaximallyentangledqudit}. 

We first recall the $CZ$ gate, 
\begin{equation*}
    CZ = \sum_{a,b} \omega^{ab}\ketbra{ab} = \sum_a \ketbra{a}\otimes Z^a = \sum_b Z^b \otimes \ketbra{b}.
\end{equation*}
We note that $CZ$ is symmetric with respect to permutation of the qudits. 
For computations with stabilizer states, it is useful to understand how the Clifford gate $CZ$ operates on the Paulis. Of course, $CZ$ commutes with $Z\otimes I$ and $I\otimes Z$, and moreover,
\begin{equation*}
    (CZ)^k (I\otimes X) = (Z^{k}\otimes X)(CZ)^k.
\end{equation*}
The $k=1$ case is a straightforward computation using the definitions and the identity $ XZ = \omega^{-1} ZX$:
\begin{align*}
    CZ(I\otimes X) &= \sum_a \ketbra{a}\otimes (Z^a X)
    \\
    &= \sum_a \ketbra{a}\otimes \omega^a X Z^a 
    \\
    &= (Z\otimes X)\sum_a \ketbra{a}\otimes Z^a
    \\
    &= (Z\otimes X)CZ.
\end{align*}
The general case follows by repeated application of the $CZ$ gate. 


Recall the qudit graph state corresponding to a weighted graph $G$ with nodes $\{1, \dots, N\}$, edge set $E$, and weights $w_{ij}$, is given by 
\begin{equation*}
    \prod_{(i,j)\in E} (CZ_{i,j})^{w_{ij}}\ket{+}^{\otimes N}.
\end{equation*}
Since the $\ket{+}$ state is stabilized by $X$, the above implies that, for every node $i$, the graph state is stabilized by 
\begin{equation}\label{eq:qudit stab}
    X_i \prod_{j: (i,j)\in E} Z_j^{w_{ij}}.
\end{equation}
Analogously to the qubit case, these generate the stabilizer group of the graph state, and every stabilizer code is equivalent to a graph state (up to local Clifford operations) \cite{bahramgiri2006graph}. 

The form of these stabilizers directly informs the structure of qudit MBQC. In particular, to measure a stabilizer of the form $\prod_j Z_j^{p_j}$, one must only apply $CZ^{p_j}$ between the relevant qudits and an ancilla in the $\ket{+}$ state, then measure $X$ on the ancilla. This is depicted in Fig.~\ref{fig:quditgraph}. 

\begin{figure}[H]
    \centering
    \includegraphics[width=.3\linewidth]{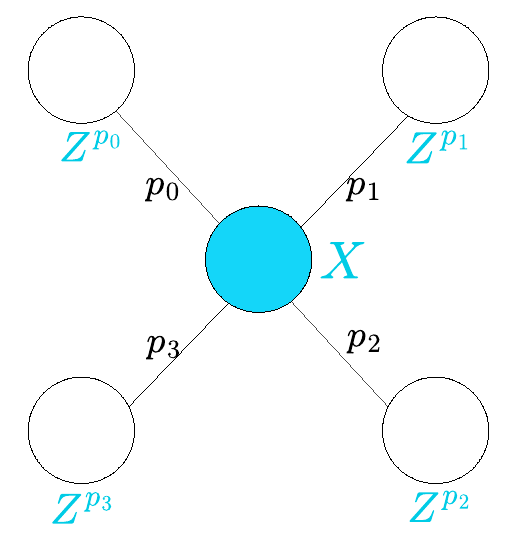}
    \caption{One of the generating stabilizers of a qudit graph state. A label $p_i$ on an edge means that we apply $(CZ)^{p_i}$ between those nodes. Since the graph state is stabilized by $Z^{p_0}\otimes\cdots\otimes Z^{p_3}\otimes X$, measuring $X$ on the marked node is equivalent to measuring $Z^{p_0}\otimes\cdots\otimes Z^{p_3}$ on the others.}
    \label{fig:quditgraph}
\end{figure}
\section{Qudit Teleportation}



As in the qubit case, the key to MBQC is a simple teleportation circuit, often referred to as \emph{one-bit teleportation} \cite{zhou2000methodology, nielsen2006cluster}. We depict the qudit case in Fig.~\ref{fig:teleportation}(a). 
We are able to teleport a state $\ket{\psi}$ to a new qudit by creating appropriate entanglement and measuring out the original qudit. 
This is how logical information is passed between layers in MBQC: when the data qudits in layer $t$ are measured, their state is teleported to the data qudits in layer $t+1$, up to a known Fourier factor (depending on the power of $CZ$ used) and a known Pauli factor (depending on the measurement result). 
In the qubit case, where $F$ is replaced with the Hadamard transform $H$, we have $H^2=I$, so the Fourier factors are automatically cancelled after two layers. 
For qudit dimension $d>2$, we have $F^2\neq I$. This is why, in Definition~\ref{def:css graph state} Part 1, we choose the edge weights (powers of $CZ$) to alternate between layers, so that we obtain cancelling factors of $F$ and $F^\dagger$. 
This is depicted in Fig.~\ref{fig:teleportation}(b). 
For completeness, we provide a quick proof of the teleportation identity: 
\begin{lemma}\label{lemma:fourier}
    We have
    \begin{equation}
        ((Z^k \ket{+})^\dagger\otimes I) CZ^{\pm 1} \left(\ket{\psi}\otimes \ket{+}\right) = \dfrac{1}{\sqrt{d}}X^{\pm k} F^{\pm 1}\ket{\psi}\label{eq:tele condensed}.
    \end{equation}
\end{lemma}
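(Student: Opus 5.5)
The plan is a direct computation in the computational basis, done once for a general sign $s\in\{+1,-1\}$ so that both cases are handled together. By linearity it suffices to take $\ket{\psi}=\sum_a \psi_a\ket{a}$ with arbitrary coefficients. I will use three facts: the characterization $CZ=\sum_a \ketbra{a}\otimes Z^a$, which gives $CZ^{s}=\sum_a \ketbra{a}\otimes Z^{sa}$; the expansions $Z^m\ket{+}=\frac{1}{\sqrt d}\sum_j \omega^{mj}\ket{j}$; and the convention that $X$ is the shift $X\ket{j}=\ket{j+1}$ (indices mod $d$), which is what the displayed sum in the paper intends.

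First I will apply $CZ^{s}$ to $\ket{\psi}\otimes\ket{+}$, which gives $\sum_a \psi_a \ket{a}\otimes Z^{sa}\ket{+}$. Next I will contract the first factor with the bra $(Z^k\ket{+})^\dagger=\frac{1}{\sqrt d}\sum_a \omega^{-ak}\bra{a}$. This leaves
\begin{equation*}
\frac{1}{\sqrt d}\sum_a \psi_a\,\omega^{-ak} Z^{sa}\ket{+} \;=\; \frac1d\sum_{j}\Big(\sum_a \omega^{a(sj-k)}\psi_a\Big)\ket{j}.
\end{equation*}

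The remaining step is to recognize the inner sum as a shifted Fourier coefficient. For $s=+1$, the inner sum is $\sum_a \omega^{a(j-k)}\psi_a=\sqrt d\,(F\psi)_{j-k}$. The output is therefore $\frac{1}{\sqrt d}\sum_j (F\psi)_{j-k}\ket{j}=\frac{1}{\sqrt d}X^{k}F\ket{\psi}$. For $s=-1$, the inner sum is $\sum_a \omega^{-a(j+k)}\psi_a=\sqrt d\,(F^\dagger\psi)_{j+k}$. Reindexing $j\mapsto j-k$ then gives $\frac{1}{\sqrt d}X^{-k}F^\dagger\ket{\psi}$, which matches \eqref{eq:tele condensed}.

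I expect no real obstacle here. The only delicate points are bookkeeping ones: keeping the signs of the exponents straight, using $\omega^{-ak}$ from the bra rather than $\omega^{ak}$, and confirming that the shift direction of $X$ produces $X^{+k}$ rather than $X^{-k}$ in the $+1$ case. As a sanity check I will compare against the identities $FZF^\dagger=X^\dagger$ and $FXF^\dagger=Z$: rewriting the result as $X^{sk}F^{s}=F^{s}Z^{-k}$ shows that the outcome-$k$ branch amounts to applying a byproduct Pauli $Z^{-k}$ before the Fourier transform, consistent with Fig.~\ref{fig:teleportation}. The factor $\frac{1}{\sqrt d}$ is the amplitude of the outcome, so each outcome occurs with probability $1/d$.
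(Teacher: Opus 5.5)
Your computation is correct and proves the lemma, but by a slightly different route from the paper. The paper expands in the computational basis only for $k=0$. It then handles general $k$ structurally:
\begin{itemize}
\item it moves the $Z^{-k}$ from the bra $(Z^k\ket{+})^\dagger=\bra{+}Z^{-k}$ onto $\ket{\psi}$, since $Z\otimes I$ commutes with $CZ$;
\item it applies the $k=0$ case to $Z^{-k}\ket{\psi}$;
\item it converts $F^{\pm 1}Z^{-k}$ into $X^{\pm k}F^{\pm 1}$ using the Fourier conjugation relations.
\end{itemize}

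You instead carry $k$ and the sign $s$ through a single basis calculation. You read the shift off as a reindexing of Fourier coefficients. This needs nothing beyond the definitions and the shift convention $X\ket{j}=\ket{j+1}$, which is in fact forced by $FXF^\dagger=Z$ rather than being just a reading of the paper's typo.

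What your version buys is that every sign is checked explicitly. It shows the byproduct ahead of the Fourier transform is $Z^{-k}$ for both signs. The paper's intermediate line instead reads $F^{\pm1}(Z^{\mp k}\ket{\psi})$, which is a slip for the lower sign; its final formula $X^{\pm k}F^{\pm 1}$ is nonetheless right, as your $s=-1$ case confirms.

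The paper's version makes the physics more transparent: outcome $k$ merely inserts a known Pauli $Z^{-k}$ before the ideal $k=0$ teleportation, which is the form used for Pauli-frame tracking. Your closing check $X^{sk}F^{s}=F^{s}Z^{-k}$ is exactly that reduction.
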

\begin{proof}
    We begin with the $k=0$ case. Writing $\ket{\psi} = \sum_a \psi_a \ket{a}$, we have
    \begin{align*}
        (\bra{+}\otimes I) CZ^{\pm 1} \left(\ket{\psi}\otimes \ket{+}\right) &=  (\bra{+}\otimes I)\frac{1}{\sqrt{d}}\sum_{a,b}  \omega^{\pm ab} \psi_a\ket{ab}
        \\&= \frac{1}{d}\sum_{a,b} \omega^{\pm ab}\psi_a \ket{b}
        \\&= \frac{1}{\sqrt{d}}F^{\pm 1}\ket{\psi}.
    \end{align*}
    The general case follows from the $k=0$ case, the relation $FZF^\dagger = X^\dagger$, and the fact that $Z$ commutes with $CZ$: 
    \begin{align*}
        ((Z^k \ket{+})^\dagger\otimes I) CZ^{\pm 1} \left(\ket{\psi}\otimes \ket{+}\right) &= (\bra{+}\otimes I) CZ^{\pm 1} \left(Z^{-k}\ket{\psi}\otimes \ket{+}\right)
        \\&= \frac{1}{\sqrt{d}}F^{\pm 1}(Z^{\mp k}\ket{\psi})
        \\&= \frac{1}{\sqrt{d}}X^{\pm k}F^{\pm 1}\ket{\psi}.
    \end{align*}

\end{proof}
\section{Exposition and examples regarding detectors}
In this section, we give a gentle review of several important concepts. We begin by discussing how a CSS stabilizer measurement is performed in MBQC and what the corresponding detector looks like. We then discuss how to utilize the same framework to measure mixed checks for non-CSS codes. 
We then discuss the cases of the toric, perfect, and CSS honeycomb codes in more detail. 
\subsection{General example: Stabilizer detector (CSS case)}\label{sec:detector general example}
We consider the example of a simple four-body $Z$ stabilizer of the form $S=Z^{p_0}\otimes Z^{p_1}\otimes Z^{p_2}\otimes Z^{p_3}$, as shown in Fig.~\ref{fig:csschecks}. 
We let $\Lambda$ be the corresponding graph state (with the understanding that the figure only depicts the parts of the graph state that are currently relevant to us). 
In Section~\ref{sec:graph states} (especially see Fig.~\ref{fig:quditgraph}), we argued that measuring $X$ on the ancilla qudit in $t=0$ causes the surrounding data qudits to be stabilized by $S$ (up to a known Pauli correction, depending on the random measurement outcome). 
In Fig.~\ref{fig:csschecks}, we present how to calculate the corresponding \emph{detector}, whose value should be $1$ in the absence of errors. 
In particular, to obtain a detector, we look for stabilizers of the graph state $\Lambda$ that commute with all measurements we will make. In our case, we will make $X$ measurements on every qudit, so we aim for a detector composed of powers of $X$. 

We begin by multiplying the stabilizers of the form \eqref{eq:qudit stab} centered at the ancillas in $t=0$ and $t=2$; motivated by \eqref{eq:detector explicit}, we use the inverse of the stabilizer at $t=2$. This gives Fig.~\ref{fig:csschecks}(a). 
So far, this intuitively matches the circuit-based setting; we are essentially taking the difference of the stabilizer measurements in two different time steps, to see if any errors have occurred in between. 
However, we need to cancel the $Z$'s on the data qudits in order to obtain a detector that will commute with the upcoming $X$ measurements. 
To do this, we multiply by (the $p_i$th powers of) the appropriate stabilizers centered on the data qudits at $t=1$, as shown in Fig.~\ref{fig:csschecks}(b). 
Due to the weights of the edges, this exactly cancels the $Z$'s on the data qudits, giving a detector of the form \eqref{eq:detector explicit}, supported on the ancillas at $t=0$ and $t=2$ and the data qudits at $t=1$. 
Since we measure $X$ on every qudit, the value of this detector will automatically be obtained during the course of the MBQC. 

A cautionary note: here we showed only the nodes directly relevant to measuring the stabilizer $S=Z^{p_0}\otimes Z^{p_1}\otimes Z^{p_2}\otimes Z^{p_3}$ and calculating its detector. 
However, if we are simultaneously measuring $X$-type stabilizers starting at $t=1$, then going from (a) to (b) in Fig.~\ref{fig:csschecks} requires one further observation. The data qudits in time step $t=1$ are also adjacent to \emph{other} ancillas, not pictured, which are used to measure the $X$ stabilizers. Thus the stabilizers of the form \eqref{eq:qudit stab} centered at these data qudits should involve powers of $Z$ on those other ancillas! However, \emph{because the $X$ and $Z$ stabilizers must commute}, it is easy to check that these powers of $Z$ will cancel when we multiply the checks together as in the figure. 
(In the subsystem case, this same mechanism is why we only obtain detectors corresponding to \emph{stabilizers} rather than all gauge group generators.) 




\begin{figure}[H]
    \centering
    \includegraphics[width=.6\linewidth]{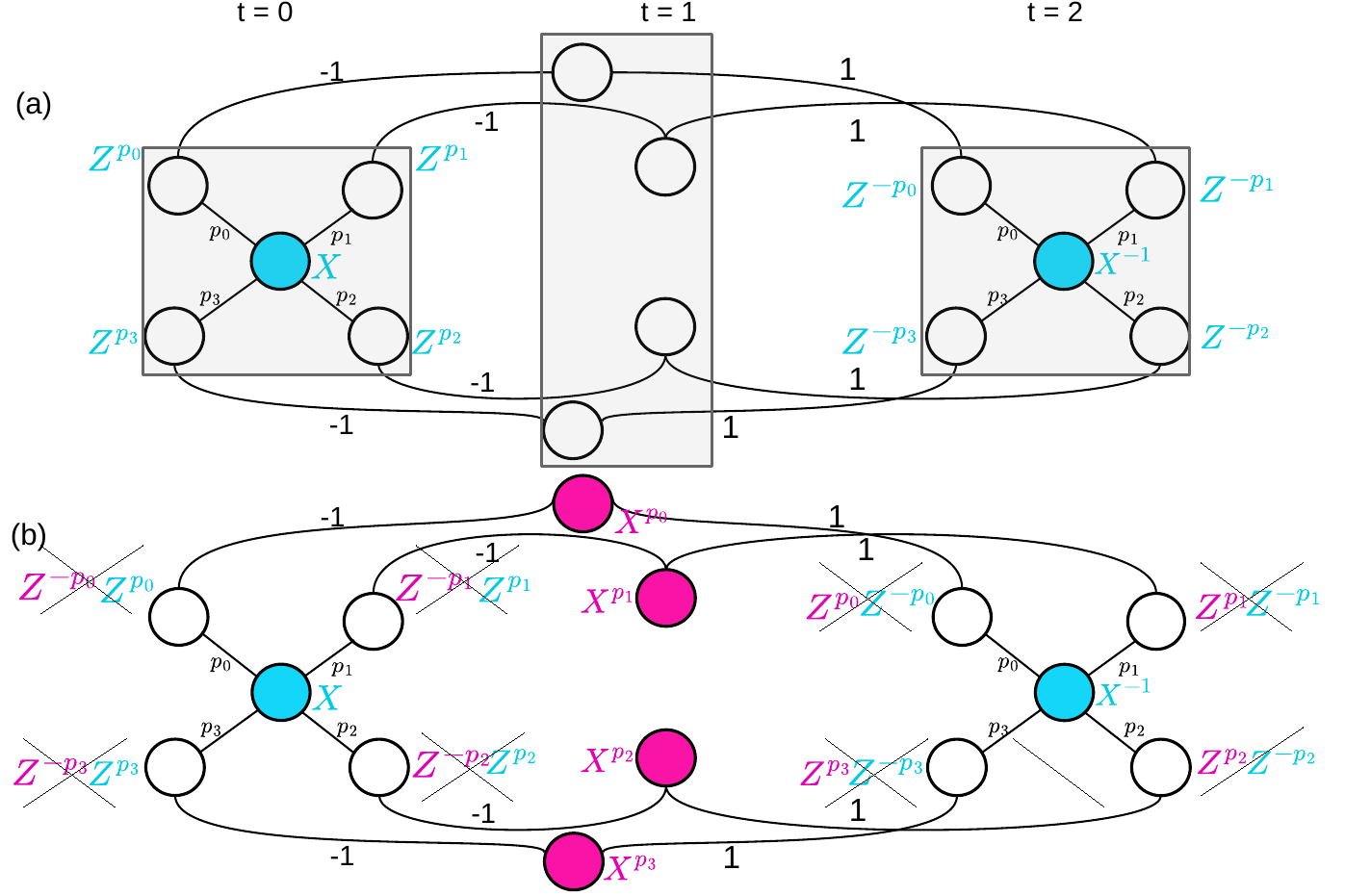}
    \caption{(a) A detector for foliated qudit MBQC. Here $Z_0^{p_0} \dots Z_{3}^{p_3}$ is the stabilizer of the code we are foliating. We multiply the corresponding graph state stabilizers in times $t=0$ and $t=2$ (blue), along with the stabilizers connecting these two time steps. (b) The same product of stabilizers with the cancellation performed. We now see that the remaining stabilizer contains only $X$'s and therefore commutes with the measurements we are making.}
    \label{fig:csschecks}
\end{figure}
\subsection{General example: Non-CSS measurements}\label{sec:non css example}
Next we discuss the measurement of non-CSS checks. As an example, we consider a check of the form $S=X^p Z^q\otimes Z^r$. 
We wish to start with a graph state, such as the one depicted in Fig.~\ref{fig:complexobservable}(a), perform single-qudit measurements, and obtain a state stabilized by $S$, as in Fig.~\ref{fig:complexobservable}(b). 
We show that this is the appropriate graph state and obtain the measurement rule of Definition~\ref{def:non css}. 

\begin{figure}[H]
    \centering
    \includegraphics[width=.6\linewidth]{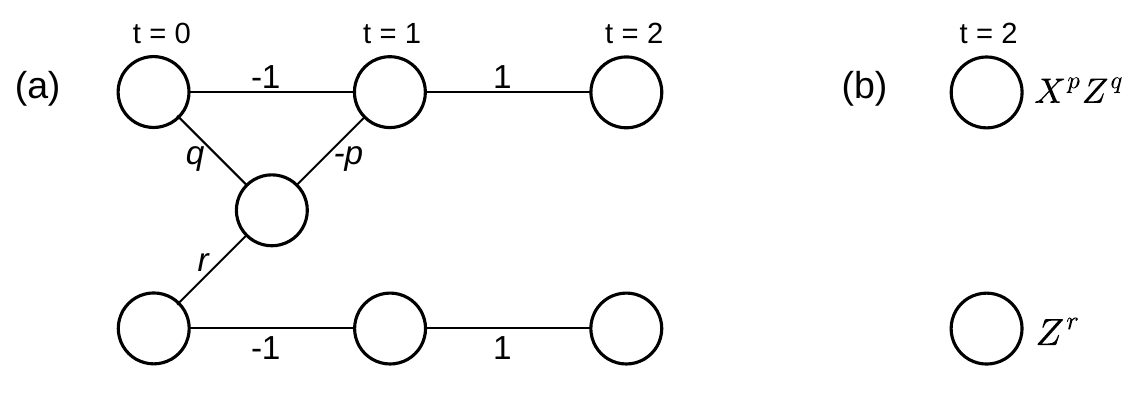}
    \caption{An example of a non-CSS check such as $X^p Z^q \otimes Z^r$. (a) is the initial graph state and (b) is the desired output.}
    \label{fig:complexobservable}
\end{figure}
As shown in Fig.~\ref{fig:solution}(a), we begin with the stabilizer centered at the central (ancilla) qudit, shown in blue.
Similarly to the previous section, we want to obtain a stabilizer of the graph state that has no $Z$'s on the data qudits in $t=0,1$. 
Thus we multiply by (the $r$th power of) the stabilizer centered at the red node in Fig.~\ref{fig:solution}(b). This solves the problem for the lower data qudit. 
We also attempt a similar cancellation for the upper data qudit, multiplying by (the $q$th power of) the stabilizer centered at the green node in Fig.~\ref{fig:solution}(c). 
This finishes cancelling the $Z$'s on the data qudits in $t=0$. 
Since we also need to cancel the $Z^{-p}$ on the upper data qudit in $t=1$, we also multiply by (the $p$th power of) the stabilizer centered at the purple node in Fig.~\ref{fig:solution}(d). 
As shown there, this gives a stabilizer of the graph state that commutes with the $X$ measurements we will make on the data qudits in $t=0,1$. 
However, unlike in the non-CSS case, the stabilizer shown in Fig.~\ref{fig:solution}(d) \emph{does not} have only a power of $X$ on the ancilla qudit! When we multiplied by the stabilizer in part (c), we ended up with a $Z^{-pq}$ factor on the ancilla. 
This is why Definition~\ref{def:non css} tells us to change the measurement to $XZ^{-pq}$. 
This change is nontrivial only when our ancilla is adjacent to the same data qudit chain state in both even and odd time steps, which can only happen in the non-CSS case. 

Once the data qudits in $t=0,1$ are measured in the $X$ basis, and the ancilla is measured in the $XZ^{-pq}$ basis, we obtain a state stabilized by $X^p Z^q \otimes Z^r$, as in Fig.~\ref{fig:complexobservable}(b). 

\begin{figure}[H]
    \centering
    \includegraphics[width=.6\linewidth]{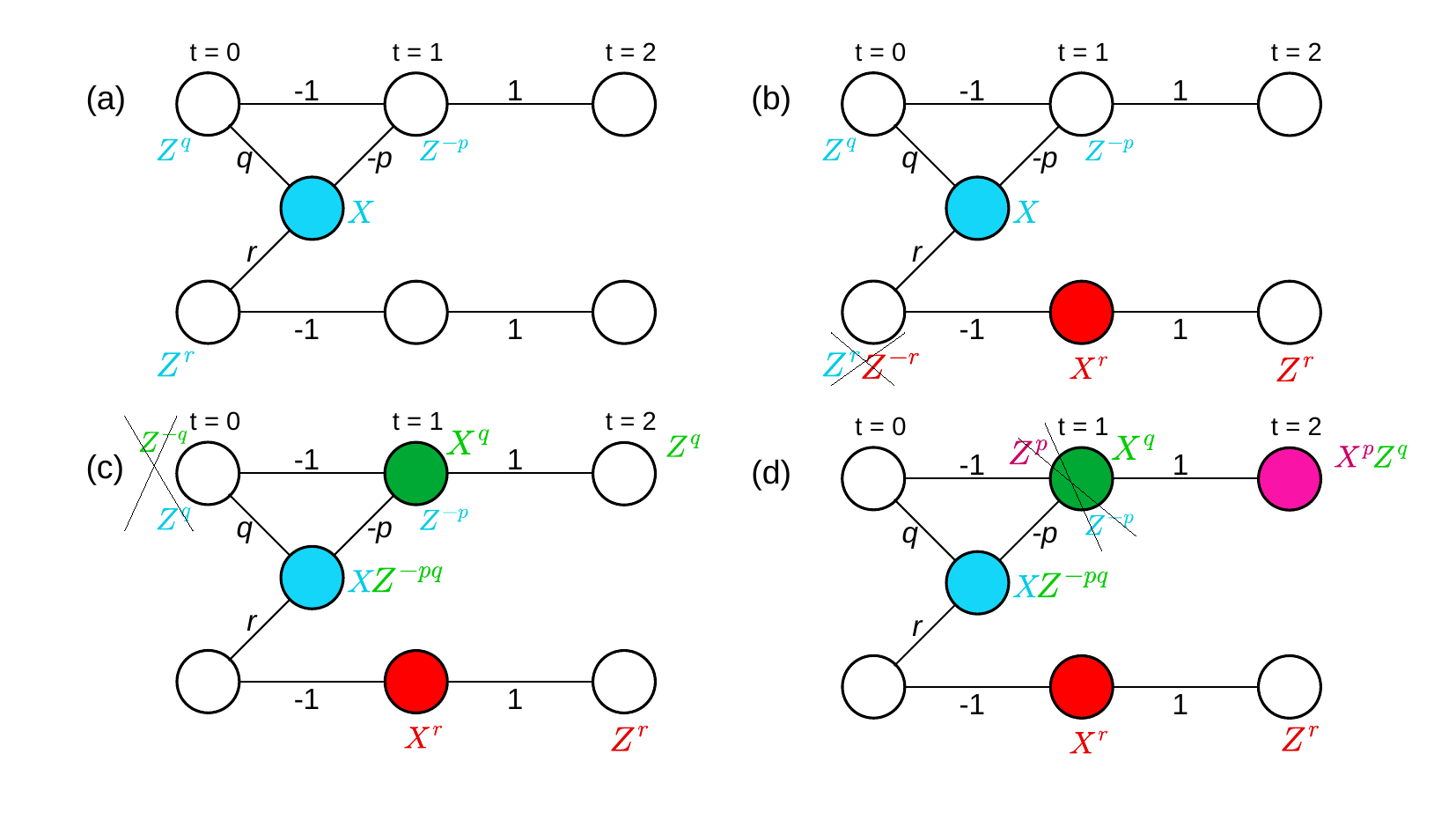}
    \caption{Step by step calculation for targeted observable  $X^p Z^q \otimes Z^r$}
    \label{fig:solution}
\end{figure}



\subsection{Example: Foliated Qudit toric code}
We briefly discuss the detectors for the foliated qudit toric code, a direct application of Sec.~\ref{sec:detector general example}. 
Fig.~\ref{fig:toricparity} shows the calculation of a detector for the $Z$-type stabilizers, of the form $Z\otimes Z\otimes Z^\dagger \otimes Z^\dagger$ (read counterclockwise around a face). 
Recall that the green edges indicate weight $-1$, with weight $+1$ otherwise. The presence of the weight $-1$ edges in $t=0$ allows for the daggers to appear in the right places in the stabilizer, as in part (a). The cancellation in part (b) is enabled by the fact that the edges from $t=0$ to $t=1$ have the opposite sign compared to the edges from $t=1$ to $t=2$. 


\begin{figure}
    \centering
    \includegraphics[width=.6\linewidth]{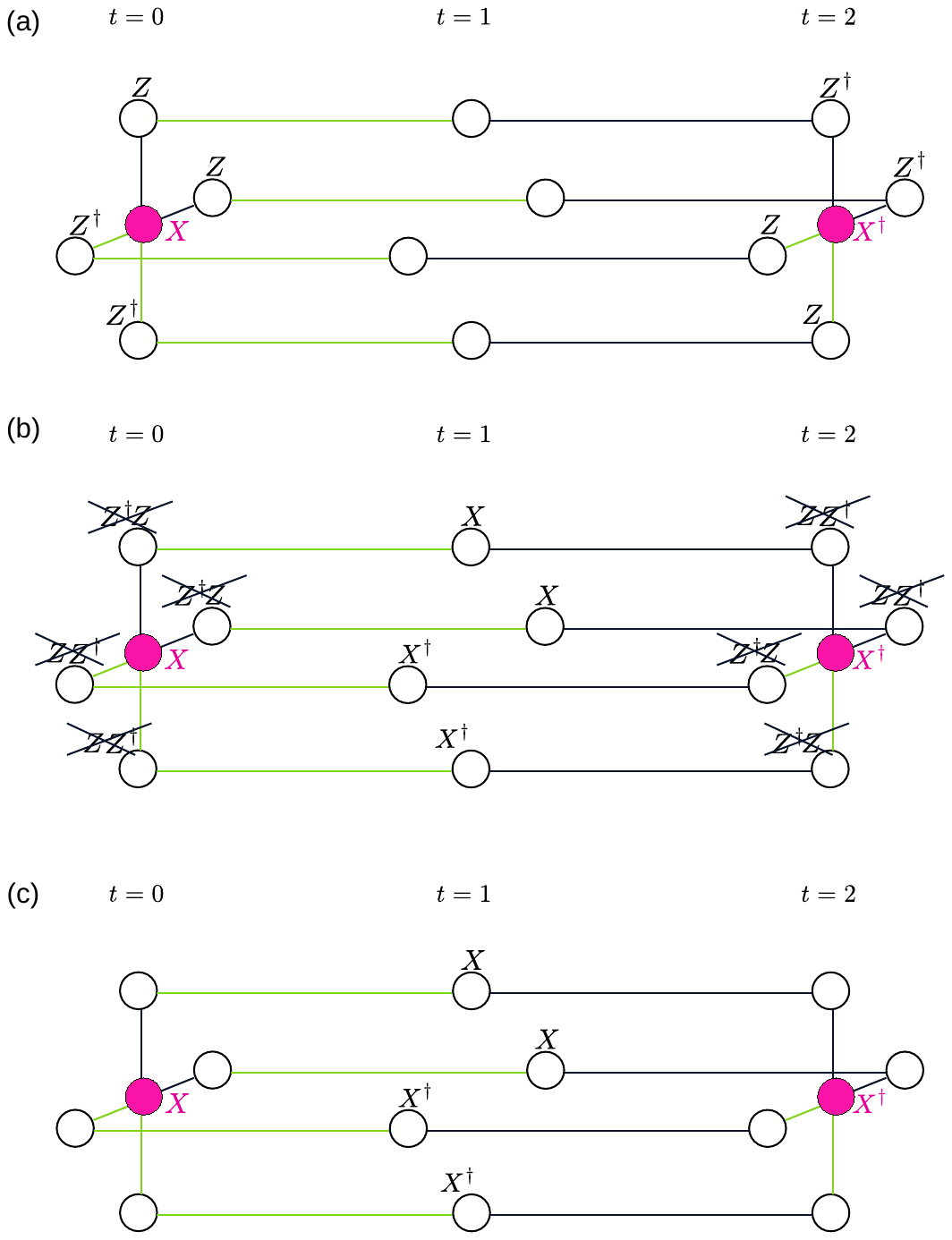}
    \caption{A $Z$-type detector in the toric code. Green edges have weight -1 and darker edges have weight $+1$. Pink circles indicate ancillas. Part (a) shows stabilizers of the graph state centered around the ancillas at $t=0$ and $t=2$, part (b) shows how we can cancel the Z factors, and part (c) shows the final detector.}
    \label{fig:toricparity}
\end{figure}

\begin{figure}
    \centering
    \includegraphics[width=.6\linewidth]{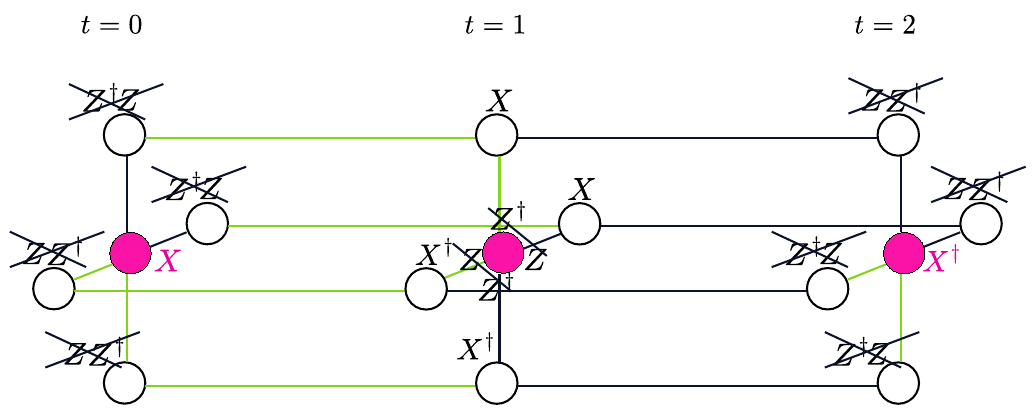}
    \caption{A $Z$-type detector for the $2\times 2$ toric code, with the $t=1$ ancilla included. }
    \label{fig:checkwithancilla}
\end{figure}

As an even more concrete example, we consider the $2\times 2$ toric code in Fig.~\ref{fig:checkwithancilla}. Here we are able to show the entire foliated graph state from $t=0$ to $t=2$. The ancilla at $t=1$ is not used for the $Z$-type detector; note how the $Z$'s on that ancilla cancel, because the $X$ and $Z$-type checks commute. 



For completeness, we also review the case of the $X$ checks (which have the form $X^\dagger \otimes X\otimes X\otimes X^\dagger$ when read counterclockwise) in Fig.~\ref{fig:Xcheck1}. We depict only the part of the graph state from $t=1$ to $t=3$. 
Except for the edge weights, we note this is identical to the $Z$-check case: the detector begins and ends with $X^{\pm 1}$ on the ancillas, with appropriate powers of $X$ on the data qudits in the middle. 
The symmetry between the two types of checks, and the translation-invariance of the toric code, gives rise to the standard cubic ``unit cell'' depiction of the foliated toric code, as in Fig.~\ref{fig:examples}(a). 




\begin{figure}
    \centering
    \includegraphics[width=.6\linewidth]{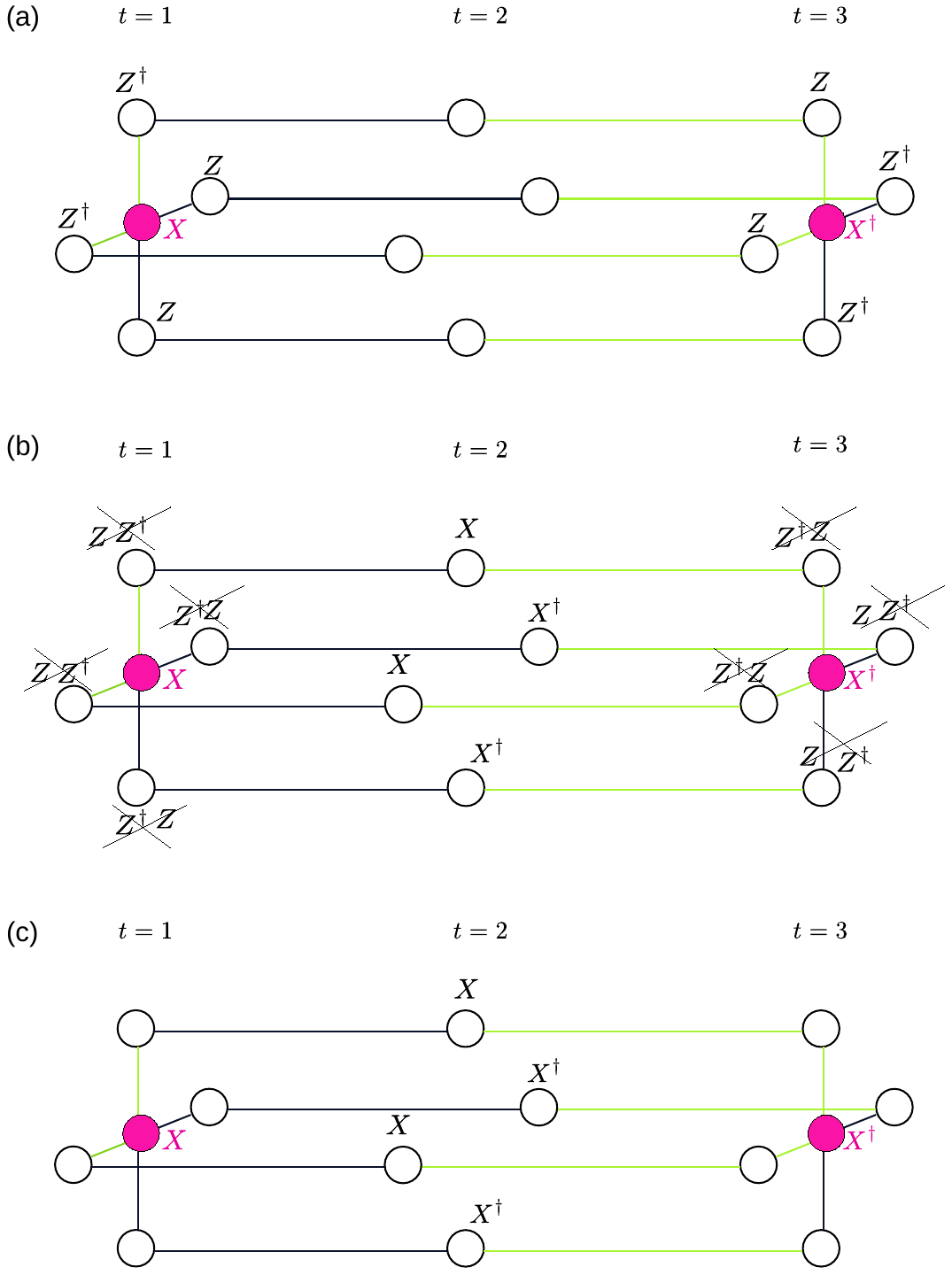}
    \caption{An $X$-type detector in the toric code. Green edges have weight -1 and darker edges have weight $+1$. Pink circles indicate ancillas. Part (a) shows stabilizers of the graph state centered around the ancillas at $t=0$ and $t=2$, part (b) shows how we can cancel the Z factors, and part (c) shows the final detector.
    }
    \label{fig:Xcheck1}
\end{figure}
\subsection{Example: Foliated qudit perfect code}\label{sec:perfect}

\begin{figure}[H]
    \centering
    \includegraphics[width=.6\linewidth]{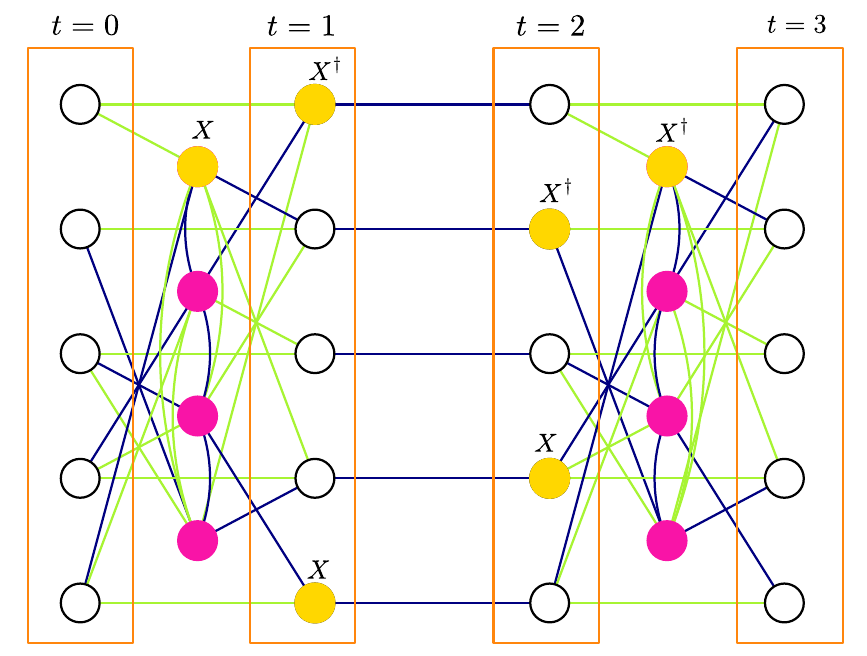}
    \caption{Foliated 5-qudit perfect code. The green edges have weight $-1$, while the dark blue edges have weight $+1$. 
    The boxed nodes are data qudits; the others are ancillas. 
    The yellow nodes are those contributing to the first detector in the code.}
    \label{fig:quditperfectcode}
\end{figure}
Here, we discuss the foliated qudit perfect code as an example of foliation for non-CSS stabilizer codes. In this case, the stabilizers are not separated into X and Z types. 
In particular, we measure the following stabilizers: 
\begin{equation}\label{eq:perfect code stabs}
    Z_0^\dagger X_1^\dagger I_2 X_3 Z_4, 
    X_0^\dagger I_1 X_2 Z_3 Z_4^\dagger, 
    I_0 X_1 Z_2 Z_3^\dagger X_4^\dagger, 
    X_0 Z_1 Z_2^\dagger X_3^\dagger I_4.
\end{equation}
In the formalism of Definition~\ref{def:non css}, we have, for all measurement rounds $k$, 
\begin{equation}
    \tilde{H}(k) = 
    \begin{bmatrix}
        -1 & 0 & 0 & 0 & 1 & 0 & -1 & 0 & 1 & 0\\
        0 & 0 & 0 & 1 & -1 & -1 & 0 & 1 & 0 & 0\\
        0 & 0 & 1 & -1 & 0 & 0 & 1 & 0 & 0 & -1\\
        0 & 1 & -1 & 0 & 0 & 1 & 0 & 0 & -1 & 0
    \end{bmatrix}.
\end{equation}
Figure~\ref{fig:quditperfectcode} shows the first two measurement rounds of the foliation. The boxed nodes represent the data qudits, with $t=0,1$ corresponding to the first measurement round and $t=2,3$ corresponding to the second measurement round. The qudits located between time steps 0 and 1, and between time steps 2 and 3, are the ancilla qudits belonging to the first and second measurement rounds respectively. 
From top to bottom, the ancillas in each measurement round correspond to the stabilizers of \eqref{eq:perfect code stabs}. 
Unlike the CSS case, there are also edges between the ancilla qudits. For convenience, we provide the adjacency matrix of the subgraph induced by the ancillary nodes, with the columns following the same order as \eqref{eq:perfect code stabs}: 
\begin{equation*}
    \begin{bmatrix}
        0 & 1 & -1 & -1 \\
        1 & 0 & 1 & -1 \\
        -1 & 1 & 0 & 1 \\
        -1 & -1 & 1 & 0
    \end{bmatrix}.
\end{equation*}

In the figure, the yellow nodes are those involved in the detector corresponding to the stabilizer $Z_0^\dagger X_1^\dagger I_2 X_3 Z_4$, spanning the first and second measurement rounds. 
In this case, the detector involves only powers of $X$ (although this does not always hold for the ancillary qudits in the non-CSS case). 
We note that this detector has the standard form of \eqref{eq:non css detector}, straightforwardly generalizing \eqref{eq:detector explicit}. There is an $X$ on the ancilla in the first measurement round and an $X^\dagger$ on the ancilla in the second measurement round. The data qudits in the support are given as follows: if the initial ancilla is adjacent to the $j$th data qudit at time $t=t_0$, then the detector is supported on the $j$th data qudit at time $t=t_0+1$. 
Since the ancilla is adjacent to data qudits with $t=0,1$, this means the support includes data qubits with $t=1,2$. 

In the ancillary files of the arXiv version of this paper, we give the adjacency matrix for the foliated code with two measurement rounds, write down the four relevant detectors explicitly, present some randomly sampled data indicating that the value of each detector is always $1$ in the absence of errors. 
\subsection{Example: Foliated Qudit CSS Honeycomb code}
As an example of a dynamical (Floquet) code, we now consider the CSS honeycomb code \cite{davydova2023floquet} in the qudit case. 
As mentioned in App.~\ref{app:honeycomb}, the CSS honeycomb code may be extended to prime-dimensional qudits by simply replacing the $Z\otimes Z$ checks with $Z\otimes Z^\dagger$ checks. 
The data qudits of the code live on a $3$-colored toric hexagonal lattice as in Fig.~\ref{fig:hex lattice supplement}. 
Extending the notation of \cite{davydova2023floquet}, the period-6 measurement schedule can be written as $gZZ^\dagger, bXX, rZZ^\dagger, gXX, bZZ^\dagger, rXX$, where $gZZ^\dagger$ means to measure $Z\otimes Z^\dagger$ on the green edges, etc. 

For the foliated CSS honeycomb code, each detector spans five time steps (as opposed to three in the stabilizer/subsystem case) and involves two colors. 
In Fig.~\ref{fig:redface} and the rest of this section, we will consider in detail the detector associated to a red face of the lattice, starting at $t=0$. 
Specifically, we will consider a $Z$-stabilizer associated with a red face of the lattice, which has the form $S=Z\otimes Z^\dagger \otimes Z\otimes Z^\dagger \otimes Z\otimes Z^\dagger$ when read counterclockwise. 
In terms of the non-foliated code, this stabilizer is first measured during the $gZZ^\dagger$ step, as it can be obtained by multiplying the values of the $Z\otimes Z^\dagger$ checks on the three green edges of a red face. 
The stabilizer $S$ then commutes with the next three rounds of measurements, even though the individual $gZZ^\dagger$ gauge measurements do not commute with the $bXX$ and $gXX$ measurements. 
(Note that the checks measured in the $rZZ^\dagger$ cannot be used to obtain the value of $S$.) 
Finally, the $bZZ^\dagger$ step measures $S$ again. 
This will correspond directly to the shape of our detector, shown in Fig.~\ref{fig:redface}. 
The detector begins at $t=0$ when the $gZZ^\dagger$ checks are measured and is supported on the three (green) ancillas in that time step. It then commutes through the three intervening rounds and ends at $t=4$ when the $bZZ^\dagger$ checks are measured (supported on the three blue ancillas at $t=4$). 
The powers of $X$ on the data qudits in $t=1$ and $t=3$ serve to appropriately cancel the $Z$'s appearing on the data qudits in $t=0,4$. 
We note that Fig.~\ref{fig:redface} shows no nodes in the $t=2$ time step, since the detector is not supported there: the $Z$'s produced by the $X$'s in $t=1,3$ all cancel. 

\begin{figure}
    \centering
    \includegraphics[width=.8\linewidth]{honeycomb_filled.pdf}
    \caption{The toric hexagonal lattice used for the CSS honeycomb code \cite{davydova2023floquet} and our qudit generalization.}
    \label{fig:hex lattice supplement}
\end{figure}

\begin{figure}[H]
    \centering
    \includegraphics[width=\linewidth]{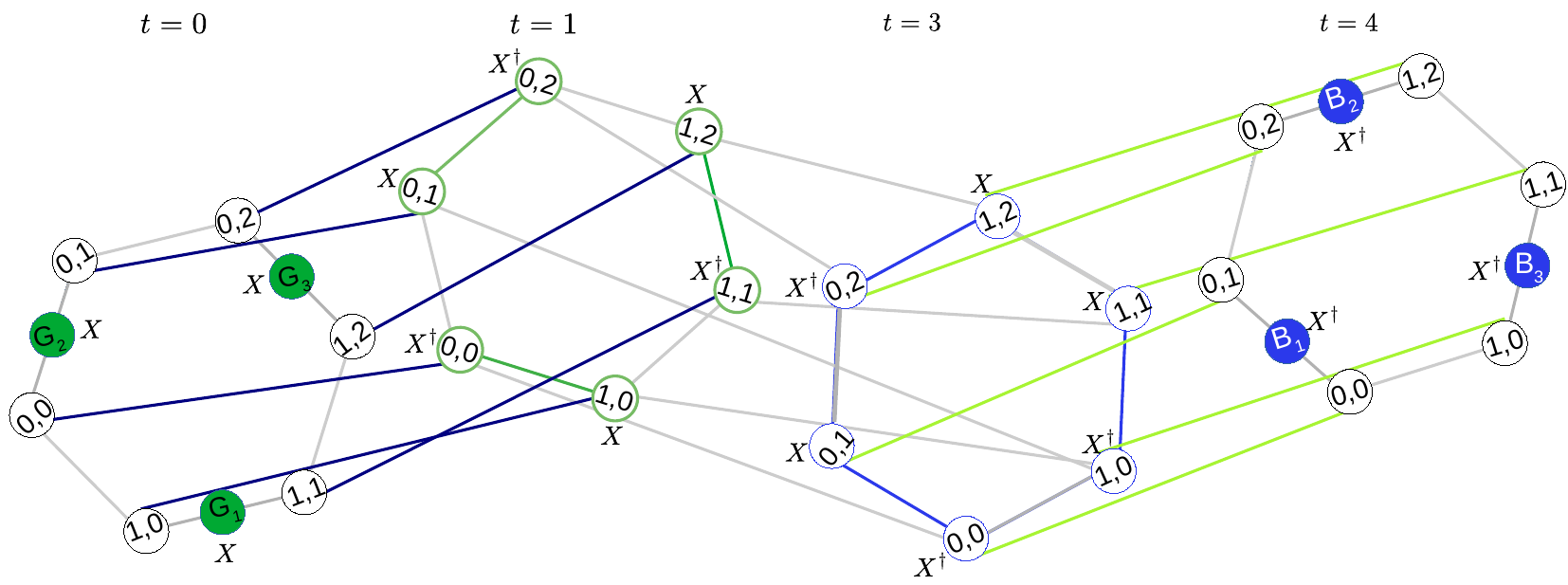}
    \caption{The detector starting at $t=0$ associated with a red face in the foliated CSS honeycomb code (for qudits). The detector begins with $3$ ancilla qudits in $t=0$ (corresponding to the $3$ $gZZ^\dagger$ checks whose values must be multiplied to obtain the stabilizer $S$), and similarly ends with $3$ ancilla qudits in $t=4$ (corresponding to the $3$ $bZZ^\dagger$ checks whose values must be multiplied to obtain $S$). 
    Only those nodes and edges necesssary to understand the detector structure are shown: in particular, the (irrelevant) ancilla qudits in $t=1,3$ and all qudits in $t=2$ are omitted. 
    }
    \label{fig:redface}
\end{figure}

\end{document}